\newtheorem{prop}{Proposition}
\newtheorem{lemma}{Lemma}
\begin{document}

\title{\LARGE{Beyond Cell-free MIMO: Energy Efficient Reconfigurable Intelligent Surface Aided Cell-free MIMO Communications}}
\author{
\IEEEauthorblockN{
Yutong Zhang,
Boya Di, \IEEEmembership{Member, IEEE},
Hongliang Zhang, \IEEEmembership{Member, IEEE},
Jinlong Lin,
Yonghui Li, \IEEEmembership{Fellow, IEEE},
Lingyang Song, \IEEEmembership{Fellow, IEEE}}

\thanks{Yutong Zhang and Lingyang Song are with the Department of Electronics, Peking University,  Beijing, China (email: yutongzhang@pku.edu.cn; lingyang.song@pku.edu.cn).}
\thanks{Boya Di is with the Department of Electronics, Peking University, Beijing, China, and also with Department of Computing, Imperial College London, London, UK(email: diboya92@gmail.com).}
\thanks{Hongliang Zhang is with the Department of Electronics, Peking University, Beijing, China, and also with the Department of Electrical Engineering, Princeton University, Princeton, NJ, USA (email: hongliang.zhang92@gmail.com).}
\thanks{Jinlong Lin is with School of Software and Microelectronics, Peking University, Beijing, China (email: linjl@ss.pku.edu.cn).}
\thanks{Yonghui Li is with School of Electrical and Information Engineering, The University of Sydney, Sydney, Australia (email: yonghui.li@sydney.edu.au).}
}
\maketitle
\vspace{-4em}
\begin{abstract}
Cell-free systems can effectively eliminate the inter-cell interference by enabling multiple base stations (BSs) to cooperatively serve users without cell boundaries at the expense of high costs of hardware and power sources due to the large-scale deployment of BSs.
To tackle this issue, the low-cost reconfigurable intelligent surface (RIS) can serve as a promising technique to improve the energy efficiency of cell-free systems.
In this paper, we consider an RIS aided cell-free MIMO system where multiple RISs are deployed around BSs and users to create favorable propagation conditions via reconfigurable reflections in a low-cost way, thereby enhancing cell-free MIMO communications.
To maximize the energy efficiency, a hybrid beamforming (HBF) scheme consisting of the digital beamforming at BSs and the RIS-based analog beamforming is proposed.
The energy efficiency maximization problem is formulated and an iterative algorithm is designed to solve this problem.
The impact of the transmit power, the number of RIS, and the RIS size on energy efficiency are investigated.
Both theoretical analysis and simulation results reveal that the optimal energy efficiency depends on the numbers of RISs and the RIS size.
Numerical evaluations also show that the proposed system can achieve a higher energy efficiency than conventional ones.

\end{abstract}

\begin{keywords}
\small
\centering
{Reconfigurable intelligent surface, Cell-free, MIMO, Hybrid beamforming, Energy efficiency}
\end{keywords}

\section{Introduction}%

During the past few decades, multiple input and multiple output~(MIMO) has drawn a great attention~\cite{SciChina,ChinaCom}.
However, the performance of multi-cell MIMO systems is typically limited by inter-cell interference due to the \emph{cell-centric} implementation.
To address this issue, the concept of cell-free networks has been proposed as a \emph{user-centric} paradigm to enable multiple randomly-distributed base stations~(BSs) without cell boundaries to coordinate with each other to serve all users in the network simultaneously~\cite{cell-free,cell-free1}.
Nevertheless, the traditional cell-free system requires a large-scale deployment of BSs, leading to an unsatisfying energy efficiency performance due to the high costs of both hardware and power sources.

To tackle this issue, the reconfigurable intelligent surface~(RIS) has emerged as a potential cost-efficient technique by creating favorable propagation conditions from BSs and users~\mbox{\cite{MRenzo-survey, cuitiejun-survey}}.
Benefited from a large number of RIS elements whose phase shifts are controlled by simple programmable PIN diodes, RISs can reflect signals and generate directional beams from BSs to users~\cite{song-survey}.
Unlike large-scale phased array antennas in the cell-free system enabled by phase shifters with inevitable power consumption, RIS requires no extra hardware implementation, such as complex digital phase shift circuits, thus greatly saving the energy consumption and complexity for signal processing~\cite{lowcost}.
Hence, compared with the conventional cell-free systems, a lower level of power consumption is required to achieve the same quality-of-services (QoS).
In other words, RIS provides a new dimension for cell-free systems to enhance the energy efficiency of the cell-free systems.

In the literature, several initial works have considered the sum rate maximization of RIS-aided cell-free systems~\cite{cellfree-poor} and the multi-cell MIMO systems assisted by one RIS~\cite{Lajos}.
Energy efficiency is only evaluated in the single-RIS aided single-cell systems~\cite{EE-ICC,EE-Lajos,EE-DRL,Chongwen Huang}.
In more detail, authors in~\cite{cellfree-poor} proposed a distributed design framework for cooperative beamforming to maximize the weighted sum rate of a cell-free system with multiple RISs.
In~\cite{Lajos}, an RIS was utilized to enhance the cell-edge performance in multi-cell MIMO communication systems.
Both the active precoding at BSs and the phase shifts at the RIS were jointly optimized to maximize the weighted sum rate of all users.
A more general multi-RIS scenario was also investigated.
In~\cite{EE-ICC}, the energy efficiency of the network was maximized by controlling the phase shifts of a single RIS and beamforming of the BS.
Authors in~\cite{Chongwen Huang} studied a multi-user multiple input single output~(MISO) system with a single RIS and developed a joint power allocation and phase shift scheme design to maximize the system energy efficiency.

However, the above works have not fully exploited the potential of multiple coordinated RISs which provide a cost-efficient solution to further improve the energy efficiency of the conventional cell-free systems.
To fill this gap in the literature, in this paper we consider an \emph{RIS aided cell-free MIMO} system where multiple BSs and RISs are coordinated to serve various users.
Benefited from the programmable characteristic of RIS elements which mold the wavefronts into desired shapes, the propagation environment can be reconfigurable in a low-cost way, thereby enhancing the cell-free MIMO communications.
To achieve favorable propagation, it is vitally important to determine the phase shifts of all RIS elements, each of which can be viewed as an antenna, inherently capable of realizing analog beamforming.
Since the RISs do not have any digital processing capability, we consider an HBF scheme~\cite{weiyu} where the digital beamforming is performed at BSs, and the RIS-based analog beamforming is conducted at RISs.
In such an RIS-aided cell-free system, the spatial resource can be better utilized benefited from the multipath effects, i.e., the direct and RIS-reflected paths, for each transmitted signal to improve the energy efficiency performance.

To fully explore how RIS as a cost-efficient technique influence the energy efficiency of the cell-free system, we aim to optimize the digital beamformer at BSs and phase shifts of RISs.
Challenges have arisen in such a system.
\emph{First}, unlike conventional cell-free systems, the propagation environment is more complicated due to the extra reflected links via RISs.
It is hard to obtain the optimal solution to the joint digital beamforming at BSs and phase shift configuration of RISs.
\emph{Second}, compared to single-RIS systems, the dispersed distribution of multiple RISs brings a new degree of freedom for energy efficiency maximization.
It remains to be explored how the number of RISs as well as the RIS size influences the performance of the multi-RIS aided cell-free system.
\renewcommand\arraystretch{1.4}
\begin{table}[t]
\small
    \centering
    \caption{Major Notations}
    \begin{tabular}{c|c}
        \hline
        \hline
        \textbf{Notation} & \textbf{Description}\\
        \hline
        \hline
        $N$ &   Number of BSs\\
        \hline
        $N_a$   &   Number of antennas of each BS\\
        \hline
        $K$ &   Number of users\\
        \hline
        $M$ &   Number of RISs\\
        \hline
        $L$ &   Size of each RIS\\
        \hline
        $b$ &   Quantization bits for discrete phase shifts\\
        \hline
        $\theta_{m,l}$  &   The phase shift of the $l$-th RIS element of the $m$-th RIS\\
        \hline
        $q_{m,l}$   &   The frequency response of the $l$-th RIS element of the $m$-th RIS\\
        \hline
        $\mathbf{H}$    &   Channel matrix\\
        \hline
        $\mathbf{H}_D$  &   Direct channel between BSs and users\\
        \hline
        $\mathbf{H}_{RU}^H$ &   Reflected channel between RISs and users\\
        \hline
        $\mathbf{H}_{BR}$ &    Reflected channel between BSs and RISs\\
        \hline
        $\mathbf{Q}$    &   Phase shift matrix\\
        \hline
        $\mathbf{V}_D$  &   Digital beamformer\\
        \hline
        $R_k$   &   Data rate of user $k$\\
        \hline
        $R$ &   Sum rate\\
        \hline
        $P_B^{(n)}$ &   Hardware static power consumption of BS $n$\\
        \hline
        $P_R^{(m,b)}$   &   Hardware static power consumption of each element of RIS $m$ with 1-bit phase shifts\\
        \hline
        $P_U^{(k)}$ &   Hardware static power consumption of user  $k$\\
        \hline
        $\mathcal{P}$   &   Total power consumption\\
        \hline
        $p_t^{(n)}$ &   Transmit power at BS $n$\\
        \hline
        $B$ & Transmission bandwidth\\
        \hline
        $P_T^{(n)}$ & Transmit power budget available at each BS $n$\\
        \hline
        \hline
    \end{tabular}   \label{tab:notation}
\end{table}

By addressing the above challenges, we contribute to the state-of-the-art research in the following ways.
\begin{enumerate}
\item  	We propose an RIS aided cell-free MIMO system where various users are served by multiple BSs both directly and assisted by RISs.
    An HBF scheme consisting of the digital beamforming at BSs and RIS-based analog beamforming is designed.
\item  	An energy efficiency maximization problem for the HBF scheme is formulated and decomposed into two subproblems, i.e., the digital beamforming subproblem and the RIS-based analog beamforming subproblem.
    An iterative energy efficiency maximization~(EEM) algorithm is proposed to solve them.
    Moreover, the impact of the transmit power, the number of RISs, and the size of each RIS on energy efficiency is analyzed theoretically.
\item   Simulation results indicate that the RIS aided cell-free system has a better energy efficiency performance compared to traditional ones including conventional distributed antenna system~(DAS), conventional cell-free system, and the no-RIS case.
    The impact of the number of RISs, the size of each RIS, the quantization bits of discrete phase shifts on energy efficiency is also shown numerically, which verify our theoretical analysis.
\end{enumerate}

The rest of this paper is organized as follows.
In Section~\ref{sec:model}, we provide the system model of the RIS aided cell-free MIMO system.
The RIS reflection model and the channel model are presented.
In Section~\ref{HBF}, the HBF scheme is proposed for the RIS aided cell-free system.
An energy efficiency maximization problem is formulated and then decomposed into two subproblems, i.e., the digital beamforming subproblem and RIS-based analog beamforming subproblem.
An EEM algorithm is designed in Section \ref{sec:algorithm} to solve the above subproblems iteratively.
In Section~\ref{sec:analysis}, we discuss the influence of the number of RISs, the size of each RIS, and the transmit power on energy efficiency theoretically.
The complexity and convergence of the proposed EEM algorithm are also analyzed.
The simulation results are given in Section~\ref{sec:simulation} to evaluate the energy efficiency and sum rate performance to validate our analysis.
Finally, we draw our conclusions in Section~\ref{sec:conclusion}.

Scalars are denoted by italic letters, vectors and matrices are denoted by bold-face lower-case and uppercase letters, respectively.
$\overline{a}$ denotes the complex conjugate of a complex scalar $a$.
The real part and the imaginary part of $a$ are denoted by $\text{Re}\{s\}$ and $\text{Im}\{s\}$, respectively.
For a complex-valued vector $\mathbf{x}$, $\text{diag}(\mathbf{x})$ denotes a diagonal matrix whose diagonal element is the corresponding element in $\mathbf{x}$.
For a square matrix $\mathbf{S}$, $\text{Tr}(\mathbf{S})$ denotes its trace.
For any general matrix~$\mathbf{M}$, $\mathbf{M}^H$ denotes its conjugate transpose, respectively.
$\mathbf{I}$ denotes an identity matrix.
In addition, the major notations in Section~\ref{sec:model} and Section~\ref{HBF} are summarized in TABLE~\ref{tab:notation}.

\section{System Model\label{sec:model}}%

In this section, we first introduce an RIS-aided downlink cell-free MIMO communication system in which multi-antenna BSs and multiple RISs coordinate to serve various single-antenna users.
The RIS reflection model and the channel model are then given.
\begin{figure}[t]
	\centering
    \includegraphics[width=1.0\textwidth]{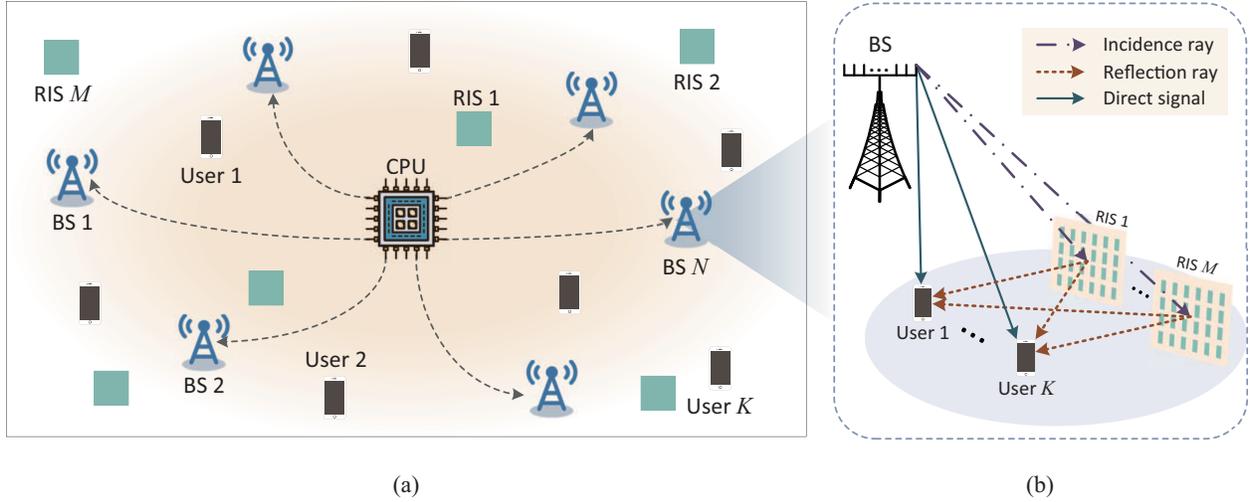}
	\caption{(a) RIS aided cell-free MIMO system; (b) Illustration of one RIS-aided cell in the MIMO system.}
	\label{Fig:model}
\end{figure}
\subsection{Scenario Description}

Consider a downlink multi-user system in Fig. \ref{Fig:model}(a) where $K$ single-antenna users are served by $N$ BSs each of which is equipped with $N_a$ antennas.
To improve the energy efficiency performance, we further consider to deploy $M$ RISs around these BSs, which passively reflect the signals from BSs and directly project to users, thereby forming an RIS aided cell-free MIMO system.
These RISs act as antenna arrays far away from the BS to improve the capacity and achieve a broader coverage in a low-cost way.
Moreover, the RIS technique enables signals to be transmitted via both the direct and reflected links, and thus, a more diverse transmission is obtained to improve the performance, and thus enhancing the cell-free communications.

In such a system, all BSs and RISs are coordinated to serve all users.
A central processing unit~(CPU) is deployed for network control and planning, and decides the transmission scheduling based on the locations of BSs, RISs, and users.
To characterize the optimal performance of the RIS aided cell-free MIMO system with discrete phase shifts, we assume that the channel state information~(CSI) of all channels, i.e., direct channel between BSs and users and reflected channel via multiple RISs, is perfectly known at the CPU based on the various channel acquisition methods discussed in~\cite{channel-estimation,channelestiD,channelestiH}.
As shown in Fig. \ref{Fig:model}(b), after receiving the signals transmitted from BSs, multiple RISs operating at the same frequency are employed to reflect these signals to the users synchronously.
Specifically, users receive both the signals transmitted from BSs directly and various reflection signals via multiple RISs.
\subsection{RIS Reflection Model}

RIS is an artificial thin film of electromagnetic and reconfigurable materials, consisting of $L$ \emph{RIS elements}, each of which is a sub-wavelength meta-material particle connected by multiple electrically controllable PIN diodes~\cite{Hu-RIS}.
Each PIN diode can be switched ON/OFF based on the applied bias voltages, thereby manipulating the electromagnetic response of RIS elements towards incident waves.
Benefit from these built-in programmable elements, RISs serve as low-cost reconfigurable phased arrays without extra active power sources which only rely on the combination of multiple programmable radiating elements to realize a desired transformation on the transmitted, received, or reflected waves.

By a $b$-bit re-programmable meta-material, each RIS element can be configured into $2^b$ possible amplitudes/phase shifts to reflect the radio wave.
Without loss of generality, we assume that each RIS element is configured to maximize the reflected signal, i.e., the amplitude of each element of each RIS is set as 1~\cite{beta-1, beta-2}.
Let $\theta_{m,l}$ denote the phase shift of the $l$-th RIS element of the $m$-th RIS, assuming that signals are reflected ideally without the hardware imperfections such as non-linearity and noise, the corresponding frequency response of each RIS element $q_{m,l}$~($1\leq m\leq M, 1\leq l\leq L$) can be expressed by
\begin{align}\label{response}
  &q_{m,l}=e^{j\theta_{m,l}},\\
  &\theta_{m,l}=\frac{i_{m,l}\pi}{2^{b-1}}, i_{m,l}\in\{0,1,...,2^b-1\}.
\end{align}
\subsection{Channel Model}

As shown in Fig. \ref{Fig:model}, each user receives signals from both $N$ BSs directly and $M$ RISs which reflect the signals sent by BSs.
The channel between each antenna of BS $n$ and each user $k$ consists of a \textbf{direct link} from BS $n$ to user $k$ and $M\times L$ \textbf{reflected links} since there are $M$ RISs each of which relies on $L$ elements to reflect the received signals.

Therefore, the equivalent channel $\mathbf{H}\in\mathbb{C}^{K\times NN_a}$ can be given by
\begin{equation}\label{channel}
    \mathbf{H} = \underbrace{\mathbf{H}_D}_{\text{Direct link}}+\ \underbrace{\mathbf{H}_{RU}^H\mathbf{Q}\mathbf{H}_{BR}}_{\text{Reflected links}},
\end{equation}
where $\mathbf{H}_D\in\mathbb{C}^{K\times NN_a}$ denotes the direct component between BSs and users, the $\mathbf{H}_{RU}^H\mathbf{Q}\mathbf{H}_{BR}$ denotes the reflected links via $M$ RISs.
Specifically, each RIS~$m$ reflects incident signal based on an equivalent $ML\times ML$ diagonal phase shift matrix~$\mathbf{Q}$ consisting of the phase shifts $\{q_{m,l}\}$.
$\mathbf{H}_{BR}\in\mathbb{C}^{ML\times NN_a}$ and $\mathbf{H}_{RU}^H\in\mathbb{C}^{K\times ML}$ represent the reflected channel between BSs and RISs and that between RISs and users, respectively.
We assume that all channels are statistically independent and follow Rician distribution.
Though we consider a downlink communication system in this paper, the results can be easily extended to an uplink case.
\section{Hybrid Beamforming and Problem Formulation}\label{HBF}

In this section, we first present an HBF scheme based on the system model introduced in Section~\ref{sec:model}.
An energy efficiency maximization problem is formulated and decomposed for the proposed HBF scheme.
\subsection{Hybrid Beamforming Scheme}

For each RIS $m$, $L$ RIS elements can be viewed as antenna elements far away from the BSs, inherently capable of realizing analog beamforming via determining the phase shifts of all RIS elements.
Moreover, BSs are responsible for the signal processing since RIS elements are passive devices without any digital processing capacity.
To create reflected waves towards preferable directions to serve multiple users, we present an HBF scheme for the proposed RIS aided cell-free MIMO system given the RIS reflection model and the channel model.
Specifically, the digital beamforming is performed at BSs while the analog beamforming is achieved by these RISs.

\subsubsection{Digital Beamforming}

For $K$ users, BSs encode $K$ different data streams via a digital beamformer~$\mathbf{V}_D\in\mathbb{C}^{NN_a\times K}$, and then up-converts the encoded signals over the carrier frequency and allocates the transmit powers.
We assume that $K\leq NN_a$.
The users' signals are sent directly through $N$ BSs each equipped with $N_a$ antennas, which can be given by
\begin{equation}\label{digital}
  \mathbf{x} = \mathbf{V}_D\mathbf{s},
\end{equation}
where $\mathbf{s}\in\mathbb{C}^{K\times1}$ denotes the incident signal vector for $K$ users.

\subsubsection{RIS-based Analog Beamforming}
In the RIS aided cell-free MIMO system, the analog beamforming is achieved by determining the phase shifts of all RIS elements, i.e., $\mathbf{Q}$.
Specifically, RIS~$m$ with $L$ elements performs a linear mapping from the incident signal vector to a reflected signal vector.
Therefore, the received signal at user $k$ can be expressed by
\begin{equation}\label{re-signal}
  z_k = \mathbf{H}_k \mathbf{V}_{D,k} s_k + \underbrace{\sum_{k'\neq k} \mathbf{H}_k \mathbf{V}_{D,k'} s_{k'}}_{\text{inter-user interference}} + \omega_k,
\end{equation}
where $\omega_k\sim\mathcal{C}\mathcal{N}(0,\sigma^2)$ is the additive white Gaussian noise.
$\mathbf{H}_k$ and $\mathbf{V}_{D,k}$ denote the $k$-th row and the $k$-th column of matrices $\mathbf{H}$ and $\mathbf{V}_D$, respectively.
\subsection{Energy Efficiency Maximization Problem Formulation}

Based on the RIS reflection model, the channel model, and the HBF scheme, the data rate of user $k$ can be given by
\begin{equation}\label{rate}
\begin{aligned}
  R_k  &= \log_2\left (1+\frac{|\mathbf{H}_k\mathbf{V}_{D,k}|^2}{\sum_{k'\neq k}|\mathbf{H}_k\mathbf{V}_{D,k'}|^2+\sigma^2}\right)\\
   &= \log_2\left (1+\frac{|\mathbf{H}_{D,k}\mathbf{V}_{D,k} + \mathbf{H}_{RU,k}^H\mathbf{Q}\mathbf{H}_{BR}\mathbf{V}_{D,k}|^2}{\sum_{k'\neq k}|\mathbf{H}_{D,k}\mathbf{V}_{D,k'} + \mathbf{H}_{RU,k}^H\mathbf{Q}\mathbf{H}_{BR}\mathbf{V}_{D,k'}|^2+\sigma^2}\right),
\end{aligned}
\end{equation}
where $\mathbf{H}_{D,k}$ and $\mathbf{H}_{RU,k}^H$ denote the $k$-th rows of matrices $\mathbf{H}_D$ and $\mathbf{H}_{RU}^H$, respectively.
The sum rate of all the users can be given by
\begin{equation}\label{rate}
  R = \sum\limits_{1\leq k\leq K} R_k.
\end{equation}

Since the signals are \emph{passively} reflected by RISs via controlling phase shifts of its elements, the power consumption at RISs only relies on the hardware cost which depends on the number of quantization bits~$b$ with no transmit power consumption~\cite{Chongwen Huang}.
Therefore, the total power consumption $\mathcal{P}$ consists of the transmit power of BSs, the hardware static power consumption of each BS $P_B^{(n)}$, each element $l$ of RIS $m$ with $b$-bit phase shifts $P_R^{(m,b)}$, and each user $k$ $P_U^{(k)}$, expressed by
\begin{equation}\label{totalpower}
  \mathcal{P} = \sum_n^N(\omega_n p_t^{(n)} + P_B^{(n)}) + \sum_{m,l}^{M,L}P_R^{(m,b)} + \sum_k^KP_U^{(k)},
\end{equation}
where $\omega_n > 0$ is a constant reflecting the power amplifier efficiency, feeder loss and other loss factors due to power supply and cooling for BS $n$.
$p_t^{(n)}=\text{Tr}(\mathbf{V}_{D,n}^H\mathbf{V}_{D,n})$ is the transmit power at BS $n$, where $\mathbf{V}_{D,n}$ denote the $n$-th block of $\mathbf{V}_{D}$ which represents digital beamformer of the $n$-th BS.

Therefore, the energy efficiency of the RIS aided cell-free system can be expressed by~\cite{bandEE}
\begin{equation}
\eta = \frac{BR}{\mathcal{P}},
\end{equation}
where $B$ denotes the transmission bandwidth.

The energy efficiency maxmization problem can be formulated as
\begin{subequations}\label{EEM}
\begin{align}
\max\limits_{\mathbf{V}_D,\mathbf{Q}}\quad &\eta, \label{FinalObj}\\
s.t. \ &p_t^{(n)}\leq P_T^{(n)}, \forall 1\leq n\leq N\\
  &q_{m,l}=e^{j\theta_{m,l}},\\
  &\theta_{m,l}=\frac{i_{m,l}\pi}{2^{b-1}}, i_{m,l}\in\{0,1,...,2^b-1\},
\end{align}
\end{subequations}
where $P_T^{(n)}$ is the transmit power budget available at each BS $n$, the digital beamformer~$\mathbf{V}_{D}$ and phase shift matrix $\mathbf{Q}$ are required to be optimized.
\subsection{Problem Decomposition}

The energy efficiency maximization problem~(\ref{EEM}) is difficult to solve due to the complicated interference items.
Moreover, it needs to jointly optimize both the digital beamformer and the RIS-based beamformer where the latter involves a large number of discrete variables, i.e., the phase shifts of each RIS element~$\theta_{m,l}$.
To solve this problem efficiently, we decouple it into two subproblems as shown below.

\subsubsection{Digital Beamforming Subproblem}

Given the fixed RIS-based beamformer~$\mathbf{Q}$, the digital beamforming subproblem can be formulated by
\begin{subequations}
\begin{align}
\max\limits_{\mathbf{V}_D}\quad &\eta, \label{subpro1}\\
s.t. \ &p_t^{(n)}\leq P_T^{(n)}, \forall 1\leq n\leq N.
\end{align}
\end{subequations}

\subsubsection{RIS-based Analog Beamforming Subproblem}

Similarly, the RIS-based analog beamforming subproblem with fixed digital beamformer~$\mathbf{V}_D$ can be written by
\begin{subequations}
\begin{align}
\max\limits_{\mathbf{Q}}\quad &\eta, \label{subpro2}\\
s.t. \  &q_{m,l}=e^{j\theta_{m,l}},\\
  &\theta_{m,l}=\frac{i_{m,l}\pi}{2^{b-1}}, i_{m,l}\in\{0,1,...,2^b-1\}.
\end{align}
\end{subequations}

\section{Energy Efficiency Maximization Algorithm Design\label{sec:algorithm}}%

In this section, we develop an EEM algorithm to obtain a suboptimal solution of the energy efficiency maximization problem~(\ref{EEM}) via solving the digital beamforming subproblem and RIS-based analog beamforming subproblem in an iterative manner.
Finally, we summarize the overall algorithm.
\subsection{Digital Beamforming Design}\label{sec:digital}
Cell-free MIMO can be implemented with simple linear processing such as conjugate beamforming~(CB)~\cite{CB} and zero-forcing(ZF)~\cite{ZF-result}.
Since the CB technique suffers from high inter-user interference~\cite{dbeam-ZF}, without loss of generality~\cite{Chongwen Huang,ZF-EE}, we consider ZF beamforming with power allocation as the digital beamformer to manage the inter-user interference to achieve a near-optimal solution in MIMO systems.
According to the results in \cite{ZF-result}, the digital beamformer with a fixed RIS based analog beamformer can be given as
\begin{equation}\label{ZF beamforming}
  \mathbf{V}_D = \mathbf{H}^H(\mathbf{H}\mathbf{H}^H)^{-1}\mathbf{P}^{\frac{1}{2}} =\widetilde{\mathbf{V}}_D\mathbf{P}^{\frac{1}{2}},
\end{equation}
where $\widetilde{\mathbf{V}}_D = \mathbf{H}^H(\mathbf{H}\mathbf{H}^H)^{-1}$ and $\mathbf{P}\triangleq\text{diag}(p_1,...,p_K)$ is the power matrix in which the $k$-th diagonal element denotes the transmit power allocated to the signal intended for the $k$-th user from all the BSs.

Note that the ZF beamformer has the following properties:
\begin{equation}\label{ZF-proper}
\begin{aligned}
  &|\mathbf{H}_k\mathbf{V}_{D,k}| = \sqrt{p_k},\\
  &|\mathbf{H}_k\mathbf{V}_{D,k'}| = 0,\forall k'\neq k.
\end{aligned}
\end{equation}

For convenience, we assume that each BS has the same transmit power amplifier efficiency, i.e., $\omega_1 = ... = \omega_n = \omega$.
Based on these properties of ZF beamforming, subproblem~(\ref{subpro1}) can be reduced to a power allocation problem which can be given by
\begin{subequations}\label{powerpro}
\begin{align}
  \max\limits_{\{p_k\geqslant 0\}}\quad &\frac{\sum_{k=1}^{K}\log_2(1+p_k\sigma^{-2})}{\omega\sum_{k=1}^{K}p_k + \mathcal{P}_s},\label{dpro}\\
s.t.  \ &\text{Tr}(\mathbf{P}^{\frac{1}{2}}\widetilde{\mathbf{V}}_{D,n}^H\widetilde{\mathbf{V}}_{D,n}\mathbf{P}^{\frac{1}{2}})\leq P_T^{(n)}, \forall 1\leq n\leq N, \label{powercon}.
\end{align}
\end{subequations}
where $\mathcal{P}_s = \sum_n^NP_B^{(n)} + \sum_{m,l}^{M,L}P_R^{(m,b)} + \sum_k^KP_U^{(k)}$ is the hardware static power consumption which is independent of the optimization variables in problem~(\ref{EEM}), i.e., $\mathbf{V}_{D}$ and $\mathbf{Q}$.
$\widetilde{\mathbf{V}}_{D,n}$ denote the $n$-th block of $\widetilde{\mathbf{V}}_{D}$.

\begin{lemma}
By adopting Benson's transform method~\cite{benson}, the energy efficiency maximization problem in~(\ref{EEM}) is equivalent to
\begin{subequations}\label{digitalfinal}
\begin{align}
  \max\limits_{\{p_k\geqslant 0\}}\quad &\eta(\mathbf{P},y)=2y[\sum_{k=1}^{K}\log_2(1+p_k\sigma^{-2})]^{\frac{1}{2}}- y^2(\omega\sum_{k=1}^{K}p_k + \mathcal{P}_s),\\
s.t.  \
&\text{Tr}(\mathbf{P}^{\frac{1}{2}}\widetilde{\mathbf{V}}_{D,n}^H\widetilde{\mathbf{V}}_{D,n}\mathbf{P}^{\frac{1}{2}})\leq P_T^{(n)}, \forall 1\leq n\leq N,
\end{align}
\end{subequations}
where the optimal $y^*$ can be obtained based on the results in~\cite{weiyu-digital}, expressed by
\begin{equation}\label{y}
  y^* = \frac{\sum_{k=1}^{k}\log_2(1+p_k\sigma^{-2})}{\omega\sum_{k=1}^{K}p_k + \mathcal{P}_s}
\end{equation}
\end{lemma}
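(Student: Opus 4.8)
The plan is to recognize the problem to which (\ref{EEM}) reduces, once the ZF beamformer (\ref{ZF beamforming}) and its orthogonality properties (\ref{ZF-proper}) are substituted and $\mathbf{Q}$ is held fixed, namely the power-allocation program (\ref{powerpro}), as a \emph{single-ratio concave-over-affine fractional program}, and then to apply the quadratic (Benson) transform. Writing $A(\mathbf{P})=\sum_{k=1}^{K}\log_2(1+p_k\sigma^{-2})$ for the numerator of (\ref{dpro}) and $B(\mathbf{P})=\omega\sum_{k=1}^{K}p_k+\mathcal{P}_s$ for its denominator, the first step is to verify the two structural conditions the transform needs: $A(\mathbf{P})\geq 0$ on the feasible set (immediate from $p_k\geq 0$) and $B(\mathbf{P})>0$ (immediate from $\omega>0$ and $\mathcal{P}_s>0$). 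These guarantee that $\sqrt{A(\mathbf{P})}$ is well defined and that the surrogate objective behaves as required below.

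Next I would introduce the auxiliary variable $y$ and the surrogate $\eta(\mathbf{P},y)=2y\sqrt{A(\mathbf{P})}-y^{2}B(\mathbf{P})$ exactly as in (\ref{digitalfinal}), and prove the pointwise identity $\max_{y}\eta(\mathbf{P},y)=A(\mathbf{P})/B(\mathbf{P})$ for every feasible $\mathbf{P}$. For fixed $\mathbf{P}$ the map $y\mapsto\eta(\mathbf{P},y)$ is a downward parabola because $B(\mathbf{P})>0$; setting $\partial\eta/\partial y=2\sqrt{A(\mathbf{P})}-2yB(\mathbf{P})=0$ gives the unique maximizer $y^{*}=\sqrt{A(\mathbf{P})}/B(\mathbf{P})$, and back-substitution yields $\eta(\mathbf{P},y^{*})=2A(\mathbf{P})/B(\mathbf{P})-A(\mathbf{P})/B(\mathbf{P})=A(\mathbf{P})/B(\mathbf{P})$, which is precisely the objective (\ref{dpro}). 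I note in passing that the stated optimal value (\ref{y}) should carry a square root over its numerator to match this derivation.

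I would then lift the pointwise identity to a full equivalence. Since the feasible set for $\mathbf{P}$ is identical in (\ref{powerpro}) and (\ref{digitalfinal})—the per-BS power constraints (\ref{powercon}) are untouched and $y$ is free—taking the supremum over $\mathbf{P}$ on both sides gives $\max_{\mathbf{P}}A/B=\max_{\mathbf{P}}\max_{y}\eta=\max_{\mathbf{P},y}\eta$, so any optimal $\mathbf{P}$ of one problem is optimal for the other, with $y$ recovered through $y^{*}$. This is exactly the asserted equivalence with (\ref{EEM}) after the ZF reduction.

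Finally I would record the structural payoff that motivates the transform for the iterative algorithm: with $y$ fixed, $\eta(\mathbf{P},y)$ is concave in $\mathbf{P}$, since $\sqrt{A(\mathbf{P})}$ is concave as the composition of the concave increasing square root with the concave $A(\mathbf{P})$, while $-y^{2}B(\mathbf{P})$ is affine; hence the alternating scheme—update $\mathbf{P}$ by a concave program, then update $y$ in closed form via $y^{*}$—is well posed and monotone. The \textbf{main obstacle} is the inner maximization-over-$y$ argument together with confirming the sign and nonnegativity conditions on $A$ and $B$; once those are in hand, the rest is bookkeeping over an unchanged feasible region.
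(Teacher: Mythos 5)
Your proof is correct and follows exactly the route the paper implicitly takes: the lemma is justified in the paper only by citing the fractional-programming results of \cite{benson} and \cite{weiyu-digital}, and your argument---verifying $A(\mathbf{P})\geq 0$ and $B(\mathbf{P})>0$, maximizing the downward parabola in $y$ pointwise, lifting to a supremum over the unchanged (polyhedral, hence convex) feasible set, and noting concavity in $\mathbf{P}$ for fixed $y\geq 0$---is precisely the standard quadratic-transform derivation behind those citations. Your side remark is also right: as printed, (\ref{y}) is inconsistent with the surrogate objective in (\ref{digitalfinal}), since maximizing $2y\bigl[\sum_{k=1}^{K}\log_2(1+p_k\sigma^{-2})\bigr]^{\frac{1}{2}}-y^2\bigl(\omega\sum_{k=1}^{K}p_k+\mathcal{P}_s\bigr)$ over $y$ yields $y^*=\bigl[\sum_{k=1}^{K}\log_2(1+p_k\sigma^{-2})\bigr]^{\frac{1}{2}}/\bigl(\omega\sum_{k=1}^{K}p_k+\mathcal{P}_s\bigr)$, i.e., the numerator of (\ref{y}) should carry a square root.
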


As such, problem~(\ref{powerpro}) is transformed into a convex problem of $\mathbf{P}$, and thus, the digital beamforming subproblem can be solved by existing convex optimization techniques.
The algorithm can be summarized in Algorithm~\ref{alg:digital}.
\begin{algorithm}[t]
\label{alg:digital}
\caption{Digital Beamforming Design Algorithm}
\LinesNumbered
\KwIn{RIS-based beamformer matrix~$\mathbf{Q}$}
\KwOut{Digital beamformer matrix~$\mathbf{V}_D$}
{
Initialize $\mathbf{P}$ to a feasible value\;
\For{each iteration}{
Update $y$ by~(\ref{y})\;
Update $\mathbf{P}$ by solving the convex optimization problem~(\ref{digitalfinal}) with fixed $y$.
}
Derive the digital beamforming matrix from the optimal power allocation solution based on (\ref{ZF beamforming}).
}
\end{algorithm}
\subsection{RIS-based Analog Beamforming Design}

Note that we optimize the energy efficiency via performing the digital beamforming and RIS-based analog beamforming iteratively.
Based on the designed digital beamforming as shown in~(\ref{ZF beamforming}), the energy efficiency in~(\ref{powerpro}) depends on the RIS-based analog beamformer~$\mathbf{Q}$ only through the power constraint~(\ref{powercon}).
Therefore, the RIS-based analog beamforming subproblem can be reformulated as a power minimization problem, expressed by
\begin{subequations}
\begin{align}
  \min\limits_{\theta_{m,l}}\quad &f(\mathbf{Q}), \label{powermin}\\
s.t. \quad&q_{m,l}=e^{j\theta_{m,l}},\\
  &\theta_{m,l}=\frac{i_{m,l}\pi}{2^{b-1}}, i_{m,l}\in\{0,1,...,2^b-1\}.
\end{align}
\end{subequations}
where
\begin{equation}\label{f_Q}
  f(\mathbf{Q}) = \text{Tr}(\mathbf{V}_D^H\mathbf{V}_D).
\end{equation}
However, problem~(\ref{powermin}) is still difficult to solve due to the large-scale of~$\mathbf{Q}$.
For simplicity, we ignore the items about direct link~$\mathbf{H}_D$ in~$f(\mathbf{Q})$ which is unrelated to the optimization objective~$\mathbf{Q}$.
For $K=NN_a$, $f(\mathbf{Q})$ can be rewritten by
\begin{equation}\label{f_Q_simple}
\begin{aligned}
  f(\mathbf{Q})&=\text{Tr}(\mathbf{V}_D^H\mathbf{V}_D)\\
  &=\text{Tr}[(\mathbf{P}^{-\frac{1}{2}}\mathbf{H}_{RU}^H\mathbf{Q}\mathbf{H}_{BR}\mathbf{H}_{BR}^H\mathbf{Q}^H\mathbf{H}_{RU}\mathbf{P}^{-\frac{1}{2}})^{-1}].
\end{aligned}
\end{equation}

For brevity, we define the index $j = (m-1)L+l$, and thus~$q_{m,l}$ is abbreviated to $q_j$.
To separate the optimization variable~$q_j$ from other fixed elements, we set the $j$-th diagonal element of $\mathbf{Q}$ as $0$ and use $\mathbf{Q}^{(-j)}$ to denote the new matrix.
Hence, $f(\mathbf{Q})$ can be rewritten as
\begin{equation}
\begin{aligned}
  f(\mathbf{Q})&=\text{Tr}[((\mathbf{A}_j+\mathbf{B}_jq_j)(\mathbf{A}_j^H+\mathbf{B}_j^H\overline{q}_j))^{-1}]\\
  &=\text{Tr}[(\mathbf{D}_j+\mathbf{C}_j\overline{q}_j+\mathbf{C}_j^Hq_j)^{-1}],
\end{aligned}
\end{equation}
where $\mathbf{A}_j=\widetilde{\mathbf{P}}\mathbf{Q}^{(-j)}\mathbf{H}_{BR}$ and $\mathbf{B}_j=\widetilde{\mathbf{P}}^{(j)}\mathbf{H}_{BR}^{(j)}$.
We denote that $\widetilde{\mathbf{P}}=\mathbf{P}^{-\frac{1}{2}}\mathbf{H}_{RU}^H$.
The $j$-th column and $j$-th row of $\widetilde{\mathbf{P}}$ and $\mathbf{H}_{BR}$ are represented by $\widetilde{\mathbf{P}}^{(j)}$ and $\mathbf{H}_{BR}^{(j)}$, respectively.
Moreover, $\mathbf{C}_j=\mathbf{A}_j\mathbf{A}_j^H+\mathbf{B}_j\mathbf{B}_j^H$ and $\mathbf{D}_j=\mathbf{A}_j+\mathbf{A}_j^H$.

Note that $\mathbf{C}_j\overline{q}_j$ and $\mathbf{C}_j^Hq_j$ are one-rank matrices and $\mathbf{D}_j$ is a full-rank matrix.
Therefore, $f(\mathbf{Q})$ can be rewritten according the Sherman Morrison formula~\cite{Sherman}, i.e., $(\mathbf{A}+\mathbf{B})^{-1}=\mathbf{A}^{-1}-\frac{\mathbf{A}^{-1}\mathbf{B}\mathbf{A}^{-1}}{1+\text{Tr}(\mathbf{A}^{-1}\mathbf{B})}$.
Through simplification, $f(\mathbf{Q})$ can be expressed by
\begin{equation}\label{f_Q_de}
  f(\mathbf{Q})=\left(\frac{a_1e^{3j\theta_{m,l}}+a_2e^{2j\theta_{m,l}}+a_3e^{j\theta_{m,l}}+a_4}{a_5e^{3j\theta_{m,l}}+a_6e^{2j\theta_{m,l}}+a_7e^{j\theta_{m,l}}+a_8}\right),
\end{equation}
where $a_1\sim a_8$ are defined as in Appendix A and independent of $\theta_{m,l}$.

The minimum value of $f(\mathbf{Q})$ can be obtained with respect to $\theta_{m,l}$ which satisfies
\begin{equation}
  \frac{\partial f(\mathbf{Q})}{\partial \theta_{m,l}}=0.
\end{equation}
Based on the results in Appendix B, the optimal value of $\theta_{m,l}$ can be obtained by
\begin{equation}\label{chi}
  \theta_{m,l}^*=2\arctan\chi,
\end{equation}
where $\chi$ is defined as in Appendix B.
Since $0\leq\theta_{m,l}\leq2\pi$, only the smaller of the two solutions of $\arctan\chi$, i.e., the solution which is less than~$\pi$, is taken into account.
Moreover, since only a limited number of discrete phase shifts are available at the RIS-based analog beamforming, the optimal value $\theta_{m,l}^*$ should be quantized to the nearest points in the set $\mathcal{F}=\{\frac{i_{m,l}\pi}{2^{b-1}}\}, i_{m,l}\in\{0,1,...,2^b-1\}$.

Therefore, starting from a randomly initiated RIS-based analog beamformer, the optimal~$\mathbf{Q}$ can be obtained by sequentially updating the phase shift of each RIS element~$(m,l)$ in an iterative manner, until the algorithm converges to a local minimum of~$f(\mathbf{Q})$.
The algorithm is summarized in Algorithm~\ref{alg:RIS}.

\begin{algorithm}[t]
\label{alg:RIS}
\caption{RIS-based Analog Beamforming Algorithm}
\LinesNumbered
\For{each iteration}
{
\For{$\mathcal{J}=1\rightarrow M\times L$}
{
Calculate $\chi$ according to Appendix B\;
Obtain $\theta_{m,l}^*=2\arctan\chi$\;
Quantize $\theta_{m,l}^*$ to the nearest points in the feasible set $\mathcal{F}$\;
Set $q_{m,l}=e^{j\theta_{m,l}^*}$.
}
}
\end{algorithm}
\subsection{Overall Algorithm Description}
In this section, we propose the EEM algorithm to solve the energy efficiency maximization problem~(\ref{EEM}) in an iterative manner.
We first design the digital beamforming $\mathbf{V}_D$ by Algorithm~\ref{alg:digital} with fixed RIS-based analog beamforming.
Based on the achieved digital beamforming, the RIS-based analog beamforming is optimized by Algorithm~\ref{alg:RIS}.
Let $\mathbf{\eta}$ denote the objective function of the energy efficiency maximization problem~(\ref{EEM}).
The EEM algorithm converges if, in the $\zeta$-th iteration, the value difference of the objective functions between two adjacent iterations is less than a predefined threshold $\varepsilon$, i.e., $\mathbf{\eta}^{(\zeta)}-\mathbf{\eta}^{(\zeta-1)}\leq\varepsilon$.

\section{Theoretical Analysis of RIS Aided Cell-free System}\label{sec:analysis}

In this section, we first analyze the convergence and computational complexity of the proposed EEM algorithm, then the impact of the transmit power, number of RISs, and size of each RIS on the energy efficiency of the RIS aided cell-free system are explore theoretically.
\subsection{Properties of the Energy Efficiency Maximization Algorithm}
We now analyze the convergence and computational complexity of the EEM algorithm.
\subsubsection{Convergence}

In the $\zeta$-th iteration, by performing Algorithm~\ref{alg:digital}, a better digital beamforming~$\mathbf{V}_D^{(\zeta)}$ is achieved with a fixed RIS-based analog beamforming~$\mathbf{Q}^{(\zeta-1)}$.
Hence, we have
\begin{equation}
  \eta(\mathbf{V}_D^{(\zeta)}, \mathbf{Q}^{(\zeta-1)})\geqslant\eta(\mathbf{V}_D^{(\zeta-1)}, \mathbf{Q}^{(\zeta-1)}).
\end{equation}
Similarly, given digital beamforming~$\mathbf{V}_D^{(\zeta)}$, the RIS-based analog beamforming is optimized to maximize the energy efficiency by Algorithm~\ref{alg:RIS}, expressed by
\begin{equation}
\eta(\mathbf{V}_D^{(\zeta)}, \mathbf{Q}^{(\zeta)})\geqslant\eta(\mathbf{V}_D^{(\zeta)}, \mathbf{Q}^{(\zeta-1)}).
\end{equation}
Based on the above inequalities, we can obtain that
\begin{equation}
\eta(\mathbf{V}_D^{(\zeta)}, \mathbf{Q}^{(\zeta)})\geqslant\eta(\mathbf{V}_D^{(\zeta-1)}, \mathbf{Q}^{(\zeta-1)}).
\end{equation}
This indicates that the objective value of problem~(\ref{EEM}) is non-decreasing after each iteration, and thus, the proposed EEM algorithm is guaranteed to converge.

\subsubsection{Computational Complexity}
We now analyze the computational complexity of the proposed EEM algorithm for two subproblems separately.
\begin{itemize}
  \item \emph{Digital beamforming:} According to Algorithm~\ref{alg:digital}, the received power for each user should be optimized by solving a convex problem which has a polynomial complexity in the number of optimization variables, i.e., the number of users $K$.
      Therefore, its computational complexity is $\mathcal{O}(K^t)$, where $1\leq t \leq 4$~\cite{complexity-convex}.
  \item \emph{RIS-based analog beamforming:} In each iteration, the phase shift of each RIS element~$(m,l)$ is updated sequentially to obtain the optimal~$\mathbf{Q}$ according to the closed-form expression in Algorithm~\ref{alg:RIS} which is \emph{unrelated} to the quantization bits for discrete phase shifts~$b$.
      Therefore, the complexity of Algorithm~\ref{alg:RIS} is $\mathcal{O}(ML)$ in each iteration while that of exhaustive search method is $\mathcal{O}(2^{bML})$.
\end{itemize}
\subsection{Performance Analysis of RIS Aided Cell-free System}
In this part, the impact of the transmit power, number of RISs, and size of each RIS on the energy efficiency of the RIS aided cell-free system is analyzed, respectively.
\subsubsection{Impact of the transmit power}
As given in the energy efficiency maximization problem~(\ref{EEM}), the transmit power is one of the key optimization variables.
In this part, we analyze the impact of transmit power on energy efficiency in the RIS aided cell-free system.

\begin{prop}\label{prop:power}
  The energy efficiency grows rapidly with a low transmit power budget available at each BS $n$, i.e., $P_T^{(n)}$, and gradually flattens as $P_T^{(n)}$ is large enough.
\end{prop}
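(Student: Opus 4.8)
The plan is to track how the optimal energy efficiency depends on the budget through the reduced power-allocation problem~(\ref{powerpro}), in which---after ZF beamforming---the objective collapses to the fractional form $\eta=\frac{B\sum_k\log_2(1+p_k\sigma^{-2})}{\omega\sum_k p_k+\mathcal{P}_s}$. The first fact I would record is monotonicity: increasing each $P_T^{(n)}$ only enlarges the feasible set of~(\ref{powerpro}), so the optimal value $\eta^*(P_T)$ is non-decreasing in the budget. The proposition then reduces to showing that this non-decreasing curve rises steeply for small budgets and saturates to a constant once the budget passes a finite threshold.

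For the saturation I would exhibit a finite unconstrained maximizer. Treating the aggregate transmit power $P=\sum_k p_k$ and letting $R(P)$ denote the best achievable sum rate at total power $P$, the numerator $R(P)$ is concave and increasing (each $\log_2(1+p_k\sigma^{-2})$ is concave increasing, and maximizing their sum over the power simplex $\{p_k\geq 0,\ \sum_k p_k=P\}$ yields a concave value function in $P$), while the denominator $\omega P+\mathcal{P}_s$ is affine and positive. A ratio of a nonnegative concave function over a positive affine function is pseudoconcave, hence $\eta(P)$ is unimodal. Moreover $\eta(0)=0$ and $\eta(P)\to 0$ as $P\to\infty$, since $R(P)$ grows only logarithmically while the denominator grows linearly; the global maximum $\eta_{\max}$ is therefore attained at a finite power level $P^\star$.

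Combining the two ingredients gives the threshold picture. Once every $P_T^{(n)}$ is large enough that $P^\star$ becomes feasible, the budget constraint~(\ref{powercon}) is inactive at the optimum, so $\eta^*(P_T)=\eta_{\max}$ is independent of $P_T$ and the curve flattens completely. For small budgets the constraint binds and the optimizer sits on the increasing branch, where $\eta^*(P_T)=\eta(P_T)$ with
\[
\frac{d\eta}{dP}=B\,\frac{R'(P)(\omega P+\mathcal{P}_s)-\omega R(P)}{(\omega P+\mathcal{P}_s)^2}>0 \quad\text{for } P<P^\star .
\]
Because $R'(P)$ is largest near $P=0$ while the stationarity condition $R'(P^\star)(\omega P^\star+\mathcal{P}_s)=\omega R(P^\star)$ drives this derivative to zero as $P\to P^\star$, the incremental gain in $\eta^*$ is greatest at low budget and vanishes near the threshold---precisely the ``grows rapidly, then gradually flattens'' behavior.

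The step I expect to be the main obstacle is reconciling the \emph{per-BS} budget constraints~(\ref{powercon}) with the single aggregate-power picture used above: the feasible region is a polytope carved out by $N$ separate trace constraints rather than one cap on $\sum_k p_k$, so the true saturation threshold is a point in $\mathbb{R}^N$ and the flat region is the orthant above it. I would argue that this does not alter the qualitative conclusion---coordinatewise monotonicity of $\eta^*$ still holds, and the finite unconstrained maximizer (which exists by the same $\log$-over-linear vanishing argument applied to the full vector $\{p_k\}$) becomes feasible once all $P_T^{(n)}$ are sufficiently large, beyond which $\eta^*$ is constant---so the one-dimensional slope computation faithfully captures the shape of the curve along any ray of increasing budgets.
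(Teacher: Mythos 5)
Your proof is correct, and it takes a genuinely different route from the paper's. The paper works at high SNR, invokes the approximation $\sum_{k}\log_2(p_k\sigma^{-2})\approx\log_2\bigl(\sum_{k}p_k\sigma^{-2}\bigr)$, and then does one-variable calculus on the aggregate power: it computes $\partial\eta/\partial\bigl(\sum_k p_k\bigr)$ explicitly, shows the numerator of this derivative is positive as $\sum_k p_k\to 0$ and strictly decreasing, so there is a unique stationary point $\sum_k p_k^*$ with the derivative positive below it and negative above it, and finally argues that once the budgets are large enough for $\sum_k p_k^*$ to be attained the allocation stops growing and $\eta$ flattens. Your argument replaces both approximations with structure: feasible-set monotonicity gives that $\eta^*(P_T)$ is non-decreasing; concavity of the value function $R(P)$ over the positive affine denominator gives pseudoconcavity, hence exact unimodality, of $\eta(P)$; and the limits $\eta(0)=0$ and $\eta(P)\to 0$ force a finite unconstrained maximizer $P^\star$, so that saturation is precisely the statement that the constraint~(\ref{powercon}) becomes inactive. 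This buys generality (no high-SNR assumption, and no log-of-sum substitution, which in the paper is a heuristic rather than an identity) and it also treats carefully a point the paper glosses over: the budget region of~(\ref{powerpro}) is a polytope cut out by $N$ per-BS trace constraints rather than a single cap on $\sum_k p_k$, and your final paragraph correctly reduces this to coordinatewise monotonicity plus eventual feasibility of $P^\star$. What the paper's computation buys in exchange is an explicit derivative expression from which the optimal operating power could be located numerically; your unimodality argument is cleaner and more rigorous but non-constructive.
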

\begin{proof}
Based on problem (\ref{powerpro}), the energy efficiency $\eta$ at \emph{high-SNR} can be rewritten as
\begin{equation}
  \eta \approx \frac{\sum_{k=1}^{K}\log_2(p_k\sigma^{-2})}{\omega\sum_{k=1}^{K}p_k + \mathcal{P}_s}
  \overset{(a)}\approx  \frac{\log_2(\sum_{k=1}^{K}p_k\sigma^{-2})}{\omega\sum_{k=1}^{K}p_k + \mathcal{P}_s}
  =\frac{\log_2(\sum_{k=1}^{K}p_k)-\log_2(\sigma^{2})}{\omega\sum_{k=1}^{K}p_k + \mathcal{P}_s},
\end{equation}
where (a) is obtained by the property of the logarithmic function~\cite{liangletter}.
Hence, the derivative of energy efficiency $\eta$ with respect to $\sum_{k=1}^{K}p_k$ can be given by
\begin{equation}\label{dpower}
  \frac{\partial\eta}{\partial(\sum_{k=1}^{K}p_k)}=
  \frac{\frac{\omega B}{\ln2}\mathcal{P}_s+\omega B\log_2\sigma^2\sum_{k=1}^{K}p_k+[\sum_{k=1}^{K}p_k-\sum_{k=1}^{K}p_k\ln(\sum_{k=1}^{K}p_k)]\frac{\omega B}{\ln2}}
  {\mathcal{P}_s^2\sum_{k=1}^{K}p_k+2\omega\mathcal{P}_s(\sum_{k=1}^{K}p_k)^2+\omega^3(\sum_{k=1}^{K}p_k)^3}.
\end{equation}
It is easy to find that the denominator is positive.
We define the molecule as $h(p_k)$, when the transmit power tends to zero, $h(p_k)$ is also positive, expressed by
\begin{equation}
  \lim\limits_{\sum_{k=1}^{K}p_k\rightarrow 0}h(p_k)=\frac{\omega B}{\ln2}\mathcal{P}_s>0.
\end{equation}

The derivative of $h(p_k)$ with respect to $\sum_{k=1}^{K}p_k$ can be given by
\begin{equation}
  h'(p_k)=-\omega B\log_2\frac{\sum_{k=1}^{K}p_k}{\sigma^2},
\end{equation}
where $\log_2\frac{\sum_{k=1}^{K}p_k}{\sigma^2}$ is approximately equal to the sum rate which is positive, thus, $h'(p_k)<0$.

Therefore, the equation $\frac{\partial\eta}{\partial(\sum_{k=1}^{K}p_k)}=0$ has an unique solution $\sum_{k=1}^{K}p_k^*$, and satisfies that
\begin{equation}
\frac{\partial\eta}{\partial(\sum_{k=1}^{K}p_k)}
\begin{cases}
>0, & \text{if }\sum_{k=1}^{K}p_k<\sum_{k=1}^{K}p_k^*\\
<0, & \text{if }\sum_{k=1}^{K}p_k>\sum_{k=1}^{K}p_k^*.
\end{cases}
\end{equation}

Hence, with a low transmit power budget available at each BS $n$, i.e., $P_T^{(n)}$, the energy efficiency grows with $P_T^{(n)}$.
However, when $P_T^{(n)}$ is large enough to achieve $\sum_{k=1}^{K}p_k=\sum_{k=1}^{K}p_k^*$, each $p_k$ is no longer growing as $P_T^{(n)}$ continues to increase, and thus, the energy efficiency gradually flattens.
This completes the proof.
\end{proof}

\subsubsection{Impact of the number of RISs}
To improve the energy efficiency performance, \emph{multiple} RISs are deployed to create favorable propagation conditions via configurable reflection from BSs to users to enhance the cell-free communication.
Therefore, it is important to analyze the impact of the number of RISs on energy efficiency given a fixed total number of BSs and RISs.

By ignoring the items about direct link~$\mathbf{H}_D$ which is unrelated to the number of RISs~$M$, the transmit power allocated to the signals intended for the user $k$ can be expressed by
\begin{equation}\label{powerk}
\begin{aligned}
  p_k &= |\mathbf{H}_k\mathbf{V}_{D,k}|^2\\
  &=\mathbf{V}_{D,k}^H\mathbf{H}_{BR}^H\mathbf{Q}^H\mathbf{H}_{RU,k}\mathbf{H}_{RU,k}^H\mathbf{Q}\mathbf{H}_{BR}\mathbf{V}_{D,k}\\
  &=\mathbf{V}_{D,k}^H\mathbf{H}_{BR}^H\mathbf{Q}^H\text{Tr}(\mathbf{H}_{RU,k}^H\mathbf{H}_{RU,k})\mathbf{Q}\mathbf{H}_{BR}\mathbf{V}_{D,k}\\
  &\overset{(a)}\approx M^2L^2\mathbf{V}_{D,k}^H\mathbf{V}_{D,k},
\end{aligned}
\end{equation}
where (a) is obtained by the well-known \emph{channel hardening} effect in MIMO communication systems~\cite{water-filling}.
Specifically, as the number of receive~(or transmit) antennas grow large while keeping the number of transmit~(or receive) antennas constant, the column-vectors~(or row-vectors) of the propagation matrix are asymptotically orthogonal.
Hence, we have
\begin{equation}\label{hardening}
\begin{aligned}
  &\lim_{M\times L\rightarrow \infty}\mathbf{H}_{BR}^H\mathbf{H}_{BR}\approx (M\times L)\mathbf{I},\\
  &\lim_{M\times L\rightarrow \infty}\mathbf{H}_{RU}^H\mathbf{H}_{RU}\approx (M\times L)\mathbf{I},
\end{aligned}
\end{equation}
where $\mathbf{I}$ denotes the identity matrix.
Moreover, the phase shift matrix~$\mathbf{Q}$ satisfy $\mathbf{Q}\mathbf{Q}^H=\mathbf{I}$ since $\mathbf{Q}$ is a diagonal matrix where the module of each element is $1$, i.e., $|q_{m,l}|=1,\forall m,l$.
Therefore, approximate formula $(a)$ in (\ref{powerk}) can be obtained.
Based on (\ref{powerk}), we present the following proposition, which will be proved in Appendix~\ref{appC}.
\begin{prop}\label{prop:M}
  When a large amount of RIS elements are deployed to assist the cell-free system, i.e., $M\times L \rightarrow\infty$, the energy efficiency decreases as the number of RISs grows.
\end{prop}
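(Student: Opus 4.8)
The plan is to substitute the channel-hardening approximation (\ref{powerk}) into the energy efficiency $\eta=BR/\mathcal{P}$ and then study its monotonicity in $M$. I would treat the per-RIS size $L$ as fixed, so that increasing the number of RISs $M$ increases the total element count $ML$, which is the setting in which (\ref{powerk}) and (\ref{hardening}) are valid, i.e. $M\times L\rightarrow\infty$. Writing $\widetilde{p}_k=\mathbf{V}_{D,k}^H\mathbf{V}_{D,k}$ for the transmit power allocated to user $k$, the approximation (\ref{powerk}) reads $p_k\approx (ML)^2\widetilde{p}_k$, so the received SNR of every user is amplified by the common beamforming gain $(ML)^2$. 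The sum rate and total power consumption then become explicit functions of $M$: $R\approx\sum_k\log_2\!\big(1+(ML)^2\widetilde{p}_k\sigma^{-2}\big)$ and $\mathcal{P}=\omega\sum_k\widetilde{p}_k+\mathcal{P}_s$, where the static term $\mathcal{P}_s$ contains the RIS hardware cost $\sum_{m,l}^{M,L}P_R^{(m,b)}=ML\,P_R^{(m,b)}$.

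First I would isolate the two competing scalings. At high SNR the sum rate splits as $R\approx 2K\log_2(ML)+\sum_k\log_2(\widetilde{p}_k\sigma^{-2})$, whose dominant $M$-dependence is the \emph{logarithmic} growth $2K\log_2(ML)$, the remaining term depending only on the bounded power allocation. In contrast, the denominator $\mathcal{P}$ contains the hardware term $ML\,P_R^{(m,b)}$, which grows \emph{linearly} in $M$. I would then differentiate $\eta=BN(M)/D(M)$ by the quotient rule, with $N(M)=R$ and $D(M)=\mathcal{P}$. Using the high-SNR estimates $N(M)\approx 2K\log_2(ML)$, $N'(M)\approx \tfrac{2K}{M\ln 2}$, $D(M)\approx ML\,P_R^{(m,b)}$ and $D'(M)\approx L\,P_R^{(m,b)}$, the numerator of the derivative collapses to
\begin{equation}
N'(M)D(M)-N(M)D'(M)\approx 2KL\,P_R^{(m,b)}\big[\log_2 e-\log_2(ML)\big],
\end{equation}
which is strictly negative whenever $\log_2(ML)>\log_2 e$, i.e. $ML>e$. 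Since $D(M)^2>0$ and $B>0$, this yields $\partial\eta/\partial M<0$, establishing the claimed decrease in the regime $ML\rightarrow\infty$.

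The message is that once $ML$ is large the linear growth of the RIS hardware power overwhelms the merely logarithmic growth of the sum rate, so adding further RISs at fixed $L$ strictly lowers the energy efficiency. The main obstacle is the transmit-power term $\omega\sum_k\widetilde{p}_k$ in the denominator, whose $M$-dependence I suppressed above and must show cannot reverse the sign. Here I would invoke Proposition~\ref{prop:power}: once the transmit-power budget is large, the energy-efficiency-optimal received power $\sum_k p_k^{\star}$ saturates, so that $\sum_k\widetilde{p}_k=\sum_k p_k^{\star}/(ML)^2$ in fact \emph{decreases} like $(ML)^{-2}$, making $\omega\sum_k\widetilde{p}_k$ negligible against $ML\,P_R^{(m,b)}$ and contributing only a vanishing, correctly-signed term to $D'(M)$. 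The remaining care is to confirm that the per-user allocation $\widetilde{p}_k$ enters $N(M)$ solely through the lower-order additive term $\sum_k\log_2(\widetilde{p}_k\sigma^{-2})$, so that its dependence on $M$ is dominated by the $\log_2(ML)/(ML)$ decay that drives the result.
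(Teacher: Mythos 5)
Your high-level intuition (logarithmic rate growth versus linear power growth) is the right tension, but your proof breaks on the setting the paper actually analyzes. The discussion immediately preceding Proposition~\ref{prop:M} fixes the \emph{total} number of BSs and RISs, $N_0 = N + M$: every added RIS replaces a BS. In the paper's proof this shows up as a static power $MLP_R + (N_0 - M)P_B + KP_U$, so the slope of the static power per added RIS is $LP_R - P_B$, not $LP_R$. Your argument rests entirely on the denominator growing linearly through the hardware term $ML\,P_R$; with the $-P_B$ offset, and with the paper's own parameter values ($P_B = 10$\,dBW while $LP_R \approx 0.2$--$2$\,W for $L=64$), that slope is \emph{negative}, so the denominator you construct decreases in $M$ while your numerator increases. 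Your quotient-rule numerator $N'(M)D(M)-N(M)D'(M)$ then comes out positive, and your route would conclude that the energy efficiency increases with $M$ --- the opposite of the claim.

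The paper's proof survives in exactly this regime because of the term you argue away. In the reduced problem~(\ref{powerpro}) the BS power entering the denominator is $\omega\sum_k p_k$ with $p_k$ the allocated powers of~(\ref{powerk}), i.e., after channel hardening $\omega M^2L^2\sum_k \mathbf{V}_{D,k}^H\mathbf{V}_{D,k}$, which grows \emph{quadratically} in $M$. The hypothesis $M\times L\to\infty$ is invoked precisely to guarantee that this quadratic term dominates the static terms ($c_5 < M^2 c_3$ in the paper's notation); a three-term sign decomposition of the derivative's numerator, together with a Jensen-inequality step showing $c_1<c_2$ (the analogue of your $ML>e$ condition), then gives $\partial\eta/\partial M<0$ regardless of the sign of $LP_R - P_B$. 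By reinterpreting the power term as the physical transmit power and invoking Proposition~\ref{prop:power} to make it decay like $(ML)^{-2}$, you delete the dominant term of the paper's denominator. That step also makes your argument internally inconsistent: your numerator treats the transmit powers $\mathbf{V}_{D,k}^H\mathbf{V}_{D,k}$ as fixed, so the rate grows like $2K\log_2(ML)$, while your denominator treats the received powers as saturated, which forces $\mathbf{V}_{D,k}^H\mathbf{V}_{D,k}\propto (ML)^{-2}$ and a bounded rate; the two regimes cannot hold at once, so you must commit to one of them (the paper commits to fixed $\mathbf{V}_{D,k}^H\mathbf{V}_{D,k}$, letting both the allocated power and the rate inherit the array gain).
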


\subsubsection{Impact of the size of each RIS}
Each RIS relies on the combination of multiple programmable radiating elements to realize a desired transformation on the transmitted, received, or reflected waves~\cite{size}.
We now reveal the impact of the size of each RIS, i.e., the number of elements of each RIS $L$, on the energy efficiency performance.

\begin{lemma}\label{lemmaL}
  The energy efficiency of the RIS aided cell-free system tends to zero when size of each RIS tends to infinity, i.e., $\lim\limits_{L\rightarrow \infty}\eta=0$.
\end{lemma}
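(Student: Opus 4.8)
The plan is to exploit the mismatch between how the numerator $BR$ and the denominator $\mathcal{P}$ of $\eta=\frac{BR}{\mathcal{P}}$ scale with $L$: the denominator grows linearly in $L$ while the sum rate grows only logarithmically, so the ratio vanishes.

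First I would isolate the $L$-dependence of the total power consumption in~(\ref{totalpower}). The RIS hardware term $\sum_{m,l}^{M,L}P_R^{(m,b)}$ grows linearly in $L$, whereas the transmit-power term $\sum_n^N\omega_n p_t^{(n)}$ is bounded above by $\sum_n^N\omega_n P_T^{(n)}$ through the power-budget constraint, and the static terms $\sum_n^N P_B^{(n)}$ and $\sum_k^K P_U^{(k)}$ are constants. Hence $\mathcal{P}\geq\sum_{m,l}^{M,L}P_R^{(m,b)}$, and together with the boundedness of the remaining terms this gives $\mathcal{P}=\Theta(L)$ as $L\rightarrow\infty$ with $M$ fixed.

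Next I would bound the sum-rate growth. The regime $L\rightarrow\infty$ is exactly where the channel-hardening approximation underlying Proposition~\ref{prop:M} applies, so I may invoke~(\ref{powerk}), namely $p_k\approx M^2L^2\,\mathbf{V}_{D,k}^H\mathbf{V}_{D,k}$, for the received power. Since $\sum_k^K\mathbf{V}_{D,k}^H\mathbf{V}_{D,k}=\text{Tr}(\mathbf{V}_D^H\mathbf{V}_D)=\sum_n^N p_t^{(n)}\leq\sum_n^N P_T^{(n)}$ is bounded by the transmit-power budget, each $\mathbf{V}_{D,k}^H\mathbf{V}_{D,k}$ is bounded and therefore $p_k=O(L^2)$ uniformly in $L$. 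Consequently $R_k=\log_2(1+p_k\sigma^{-2})=O(\log L)$ and $R=\sum_k^K R_k=O(\log L)$; restoring the $L$-independent direct link $\mathbf{H}_D$ only perturbs each SINR by a bounded factor and cannot change this logarithmic order.

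Combining the two estimates yields $\eta=\frac{BR}{\mathcal{P}}\leq\frac{BR}{\sum_{m,l}^{M,L}P_R^{(m,b)}}=\frac{B\cdot O(\log L)}{\Theta(L)}\rightarrow 0$ as $L\rightarrow\infty$, which is the claim. The main obstacle is the rate bound: the optimal power allocation $\{p_k\}$ itself depends on $L$, so I must argue that \emph{regardless} of how the bounded transmit power is distributed across the users, the $L^2$ beamforming gain from the growing surface enters each $R_k$ only inside a logarithm. The key point is that this logarithmic numerator is inevitably dominated by the linear growth of the RIS hardware power in the denominator, which is what drives $\eta$ to zero.
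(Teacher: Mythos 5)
Your proof is correct, and it rests on the same basic scaling comparison as the paper's: a sum rate that grows only logarithmically in $L$ (obtained, as in the paper, from the channel-hardening estimate (\ref{powerk})) set against a power consumption that diverges polynomially in $L$. The two arguments differ, however, in \emph{which} power term supplies the divergence. The paper substitutes $p_k\approx M^2L^2\mathbf{V}_{D,k}^H\mathbf{V}_{D,k}$ into both the numerator and the denominator of the reduced objective (\ref{powerpro}), so its denominator blows up like $\omega L^2M^2\sum_{k}\mathbf{V}_{D,k}^H\mathbf{V}_{D,k}$; this works only because (\ref{powerpro}) writes the consumed power as $\omega\sum_k p_k$, i.e., it identifies the ZF \emph{received} powers $p_k$ with consumed transmit power, a quantity that under channel hardening grows like $L^2$ even though the radiated power is capped by the budget constraint. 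You instead stay with the original power model (\ref{totalpower}), bound the radiated power by $\sum_n P_T^{(n)}$ via the constraint in (\ref{EEM}), and let the RIS hardware static term $\sum_{m,l}P_R^{(m,b)}=MLP_R$ provide a linear divergence, giving $\eta=O(\log L)/\Theta(L)\rightarrow 0$. Your accounting is the more faithful one to the paper's own definition $\eta=BR/\mathcal{P}$, and it is robust to the conflation of received and transmit power just described; what the paper's route buys in exchange is the explicit asymptotic expression (\ref{EE-size}), which is not merely used to take the limit but is reused in the proof of Proposition~\ref{prop:size} to characterize the finite-$L$ behavior of $\eta$.
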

\begin{proof}
  When the size of each RIS tends to infinity, i.e., $L\rightarrow \infty$, based on~(\ref{powerpro}) and (\ref{powerk}), the energy efficiency $\eta$ at \textbf{high SNR} can be rewritten as
\begin{equation}\label{EE-size}
  \eta\approx\frac{\sum_{k=1}^{K}\log_2(p_k\sigma^{-2})}{\omega\sum_{k=1}^{K}p_k + \mathcal{P}_s}
  \approx\frac{2BK\log_2L+B\sum_{k=1}^{K}log_2(M^2\mathbf{V}_{D,k}^H\mathbf{V}_{D,k}\sigma^{-2})}{\omega L^2\sum_{k=1}^{K} (M^2\mathbf{V}_{D,k}^H\mathbf{V}_{D,k})+L\cdot MP_R+NP_B+KP_U}.
\end{equation}
It is obvious that the energy efficiency tends to zero when size of each RIS tends to infinity, i.e., $\lim\limits_{L\rightarrow \infty}\eta=0$.
\end{proof}
Based on Lemma~\ref{lemmaL}, we present the following proposition, which will be proved in Appendix~\ref{app:size}.
\begin{prop}\label{prop:size}
  As the size of each RIS $L$ increases, the energy efficiency of RIS aided cell-free system increases first and then gradually drops to zero if $\frac{\mathcal{P}_s}{M\cdot P_R}>\ln{\frac{NP_T}{k\sigma^2}}$.
\end{prop}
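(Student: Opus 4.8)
The plan is to build directly on Lemma~\ref{lemmaL}, which already supplies the high-SNR form of the energy efficiency as a function of the per-RIS size $L$: a ratio whose numerator $g(L)=2BK\log_2 L+B\sum_{k}\log_2(M^2\mathbf{V}_{D,k}^H\mathbf{V}_{D,k}\sigma^{-2})$ grows only logarithmically, while its denominator $d(L)=\omega L^2\sum_{k}M^2\mathbf{V}_{D,k}^H\mathbf{V}_{D,k}+L\,MP_R+NP_B+KP_U$ is a strictly positive quadratic in $L$. Since Lemma~\ref{lemmaL} also gives $\lim_{L\to\infty}\eta=0$, it suffices to show that $\eta(L)=g(L)/d(L)$, regarded as a differentiable function on the continuous relaxation $L\geq 1$, is \emph{unimodal}: strictly increasing on an initial interval and strictly decreasing thereafter to zero.

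First I would reduce the sign of $\eta'(L)$ to that of $N(L)=g'(L)d(L)-g(L)d'(L)$, since $d(L)^2>0$. With $g'(L)=2BK/(L\ln2)$ and $d'(L)=2\omega L\sum_k M^2\mathbf{V}_{D,k}^H\mathbf{V}_{D,k}+MP_R$, the dominant term of $N$ for large $L$ is $-(4BK\omega/\ln2)\sum_k M^2\mathbf{V}_{D,k}^H\mathbf{V}_{D,k}\,L\ln L$, so $N(L)\to-\infty$ and $\eta$ is eventually decreasing, matching Lemma~\ref{lemmaL}. To pin down the shape I would differentiate once more: a short computation gives $N'(L)=-\alpha a-\alpha d_0/L^2-c\alpha/L-2a\alpha\ln L-2a\beta$, where $\alpha=2BK/\ln2$, $a=\omega\sum_k M^2\mathbf{V}_{D,k}^H\mathbf{V}_{D,k}$, $c=MP_R$, $d_0=NP_B+KP_U$, and $\beta=B\sum_k\log_2(M^2\mathbf{V}_{D,k}^H\mathbf{V}_{D,k}\sigma^{-2})$. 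At high SNR $\beta>0$ and $\ln L\geq0$ for $L\geq1$, so every term is negative and $N$ is strictly decreasing on $[1,\infty)$; hence $N$ can change sign at most once.

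Next I would examine the start of the range, which is where the hypothesis enters. Evaluating at the smallest feasible size gives $N(1)=\alpha(a+\mathcal{P}_s)-\beta(2a+c)$, using $c+d_0=\mathcal{P}_s$ at $L=1$. Identifying the constants with physical quantities — the transmit term $a\approx\omega NP_T$ at full budget, the static term $\mathcal{P}_s$, and the numerator constant $\beta\approx BK\log_2\frac{NP_T}{k\sigma^2}$ as the full-power sum rate under the allocation $p_k\approx NP_T/k$ — the requirement $N(1)>0$, after discarding the lower-order transmit-power contributions relative to the static terms, is implied by $\frac{\mathcal{P}_s}{M\cdot P_R}>\ln\frac{NP_T}{k\sigma^2}$, exactly the stated condition. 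Combining the three facts, namely $N(1)>0$, $N$ strictly decreasing, and $N(L)\to-\infty$, yields a unique $L^{\ast}$ with $N>0$ on $[1,L^{\ast})$ and $N<0$ on $(L^{\ast},\infty)$, so $\eta$ increases to a single maximum at $L^{\ast}$ and then decreases to zero, which is the claim.

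The hard part is this initial-positivity step: turning $N(1)>0$ into the clean inequality requires matching the high-SNR ratio's constants to $\mathcal{P}_s$, $MP_R$, $NP_T$ and $\sigma^2$ and justifying which terms are negligible, so that the stated form emerges as a sufficient condition rather than a messier one (the comparison in fact leaves a benign constant-factor slack, which only strengthens sufficiency). By contrast, the uniqueness of the maximizer is comparatively routine once $N'(L)<0$ has been verified.
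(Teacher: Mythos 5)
Your calculus is sound, and in one respect you go beyond the paper: by checking $N'(L)<0$ on $[1,\infty)$ you show $N$ changes sign at most once, so $\eta$ is genuinely unimodal, whereas the paper only proves the derivative is positive near $L=1$ and combines this with Lemma~\ref{lemmaL}, never excluding multiple local maxima. However, the step you yourself flag as the hard part is a genuine gap, and it traces back to which high-SNR expression you differentiate. You use the formula from Lemma~\ref{lemmaL}, whose denominator $d(L)=aL^2+cL+d_0$ with $a=\omega\sum_k M^2\mathbf{V}_{D,k}^H\mathbf{V}_{D,k}$ and $c=MP_R$ grows quadratically in $L$. Then $d'(1)=2a+c$ and, with your $\alpha=2BK/\ln 2$ and $\beta\approx\frac{BK}{\ln 2}\ln\frac{NP_T}{K\sigma^2}$, one gets $N(1)=\alpha(a+\mathcal{P}_s)-\beta(2a+c)$. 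The stated hypothesis $\frac{\mathcal{P}_s}{M\cdot P_R}>\ln\frac{NP_T}{k\sigma^2}$ dominates only the $-\beta c$ piece; to conclude $N(1)>0$ you additionally need $\mathcal{P}_s\gtrsim 2a\left(\ln\frac{NP_T}{K\sigma^2}-1\right)$, i.e.\ that the static power also dominates a transmit-power-weighted log-SNR term. Since $a$ (a transmit-power quantity, roughly $\omega NP_T$) and $c=MP_R$ are independent parameters, this cannot be deduced from the hypothesis. Moreover, the transmit-power terms you discard enter $N(1)$ with net negative sign (because $\alpha-2\beta<0$ at high SNR), so contrary to your closing remark the slack weakens rather than strengthens sufficiency.

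The paper's own proof avoids exactly this term by working with a different high-SNR expression in its appendix: there $SNR_k=MLa_k$ (array gain linear in $L$, per-user transmit beamformer norm fixed), and the consumed power is $\omega\sum_k p_k+MLP_R+KP_U+NP_B$ with $\sum_k p_k$ held constant (at the budget) as $L$ varies, so only the RIS static term depends on $L$ and the denominator's derivative is just $MP_R$. The positivity condition at $L=1$ then reduces exactly to $\omega\sum_k p_k+\mathcal{P}_s>MP_R\ln\bigl(\frac{M}{K}\sum_k a_k\bigr)$, which the stated hypothesis implies after bounding $\frac{M}{K}\sum_k a_k\leq\frac{NP_T}{K\sigma^2}$ and dropping $\omega\sum_k p_k\geq 0$ --- both manipulations that only strengthen the sufficient condition. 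To repair your argument you should either adopt this fixed-transmit-power formulation (your monotonicity-of-$N$ device then goes through unchanged and yields a strictly stronger, truly unimodal version of Proposition~\ref{prop:size}), or else state the extra assumption $\mathcal{P}_s\gg\omega NP_T\ln\frac{NP_T}{K\sigma^2}$ explicitly rather than absorbing it into an informal ``discard lower-order terms'' step.
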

\renewcommand\arraystretch{1.4}
\begin{table}[t]
\small
    \centering
    \caption{SIMULATION PARAMETERS}
    \begin{tabular}{c|c}
        \hline
        \hline
        \textbf{Parameter} & \textbf{Value}\\
        \hline
        \hline
        Rician factor~$\kappa$ & 4\\
        \hline
        Number of BSs $N$   &   4\\
        \hline
        Number of antennas of each BS $N_a$   & 8\\
        \hline
        Number of users $K$ &   8\\
        \hline
        Noise power $\sigma^2$  & $-90$~dBm\\
        \hline
        Carrier frequency &   5.9GHz\\
        \hline
        Size of each RIS element    &   0.02m\\
        \hline
        Hardware static power consumption of each BS    &   10dBw\\
        \hline
        Hardware static power consumption of each RIS element with 1-bit phase shifts    &   5dBm\\
        \hline
        Hardware static power consumption of each RIS element with 2-bit phase shifts    &   10dBm\\
        \hline
        Hardware static power consumption of each RIS element with 3-bit phase shifts    &   15dBm\\
        \hline
        Hardware static power consumption of each RIS element with continuous phase shifts    &   25dBm\\
        \hline
        Hardware static power consumption of each distributed antenna in DAS    &   20dBm\\
        \hline
        Hardware static power consumption of each user    &   10dBm\\
        \hline
        Thresholds $\varrho$ and $\varepsilon$ & 0.001\\
        \hline
        \hline
    \end{tabular}   \label{simulation}
\end{table}
\section{Simulation Results\label{sec:simulation}}%
In this section, we evaluate our proposed EEM algorithm for the RIS aided cell-free system in terms of energy efficiency.
We show how the energy efficiency is influenced by the transmit power per BS, the number of RISs, the size of each RISs, and the number of quantization bits for discrete phase shifts.
For comparison, the following schemes are considered as benchmarks.
\begin{enumerate}
  \item \emph{Conventional distributed antenna system~\cite{DAS}}: Multiple distributed antennas are deployed remotely rather than centrally at the BSs.
      For a fair comparison, we assume that the distributed antennas in the DAS and the RIS elements in the proposed scheme have the same number and locations since each RIS element can be viewed as an antenna far away from BSs.
  \item \emph{No-RIS case}: Conventional user-centric wireless communication networks without RISs. $K$ single-antenna users are served by $N$ BSs simultaneously.
  \item \emph{Conventional cell-free system~\cite{cell-free,cell-free1}:} Conventional cell-free communication networks without RISs. Different from the no-RIS case, we replace all RISs by BSs, i.e., the number of BSs is set as $N+M$.
\end{enumerate}
\begin{figure}[t]
\centering
\includegraphics[width=0.65\textwidth]{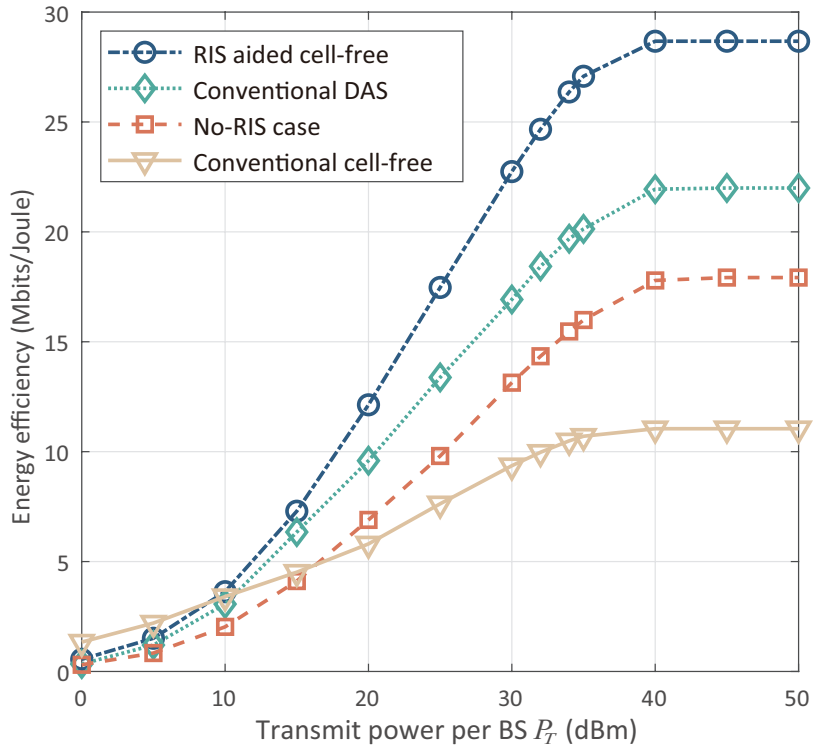}
\caption{Energy efficiency vs. transmit power per BS of different schemes with $M = 3$, $L = 64$, and $b = 3$.}
\label{Fig:SNR}
\end{figure}

For the propagations, we use the UMa path loss model in \cite{Uma} as the distance-dependent channel path loss model.
BSs and RISs are uniformly distributed in circles around the users within a radius of $150$m.
Users are uniformly distributed in the circle of radius $20$m.
The distance between two adjacent antennas at the BS is $1$m.
For small-scale fading, we assume the Rician fading channel model for all channels involved.
Simulation parameters are set up based on the existing works~\cite{Di-RIS, liangletter, Chongwen Huang}, as given in TABLE~\ref{simulation}.
\begin{figure}[t]
\centering
\includegraphics[width=0.65\textwidth]{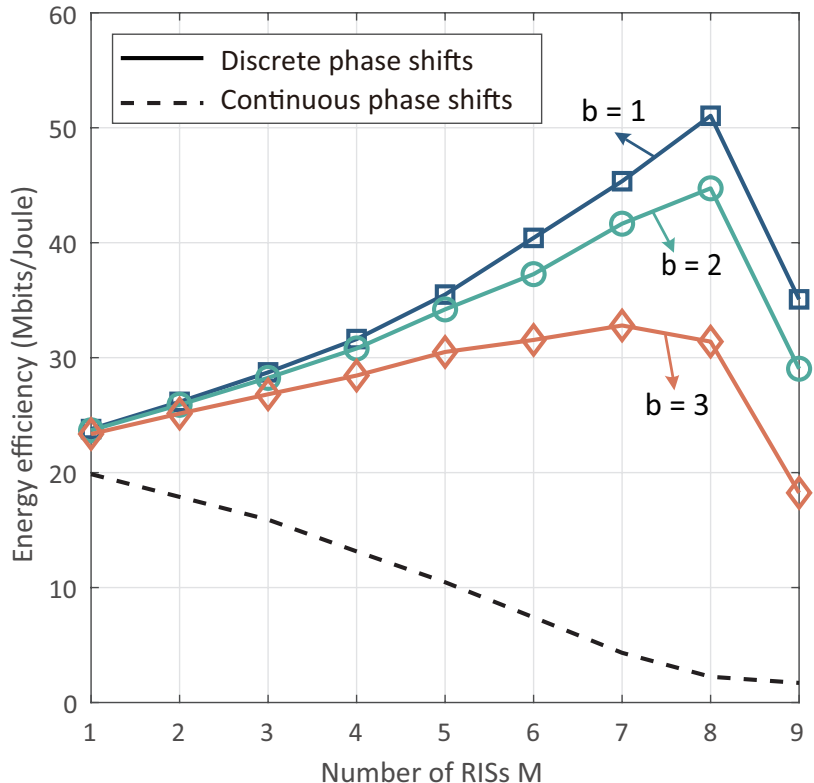}
\caption{Energy efficiency vs. number of RISs~$M$ for different number of quantization bits~$b$~($P_T = 30$~dBm).}
\label{Fig:M}
\end{figure}
\subsection{Comparison with Benchmarks}
Fig. \ref{Fig:SNR} shows the energy efficiency of the RIS aided cell-free MIMO system versus the transmit power per BS~$P_T$, obtained by different algorithms with $3$ RISs~(i.e., $M = 3$).
Each RIS consists of 64 elements~(i.e., $L = 64$), the number of quantization bits for discrete phase shifts~$b$ is set as $3$.
We observe that the energy efficiency of different systems grows rapidly with a low transmit power per BS $P_T$ and gradually flattens as $P_T$ continues to increase, as proved in Proposition~\ref{prop:power}.

Compared with the no-RIS case, we observe that these RISs deployed in the cell-free system can effectively improve the energy efficiency of the multi-user communication system.
Compared with the conventional DAS, it can be observed that the energy efficiency performance is improved in the proposed system since signals are transmitted via a larger number of independent paths~(i.e., the direct and reflected links), implying that the spatial resources are better utilized.
However, with a low transmit power per BS~$P_T$, the energy efficiency of conventional cell-free systems is higher than that of the proposed scheme.
This is because with a lower total transmit power budget available at all BSs, the energy efficiency mainly depends on the sum rate.
Conventional cell-free systems provide more transmit power by more BSs, thereby achieving a higher sum rate.
When the transmit power per BS is larger than $10$dBm, the hardware static power consumption becomes one of the dominant items of energy efficiency, and thus, our proposed system can achieve a better energy efficiency performance than the conventional cell-free systems.
\subsection{Impact of Number of RISs}
\begin{figure}[t]
\centering
\includegraphics[width=0.65\textwidth]{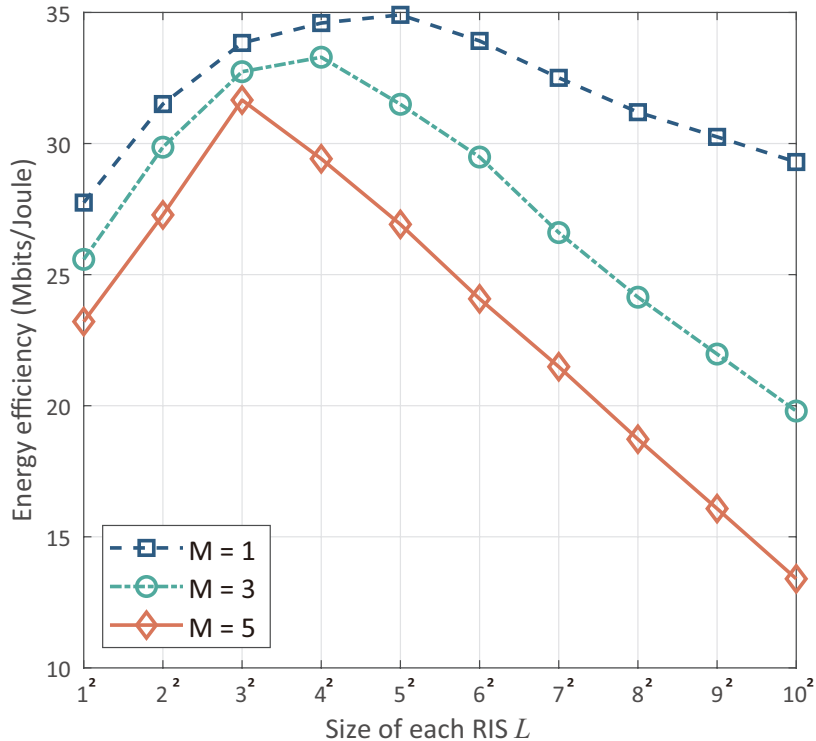}
\caption{Energy efficiency vs. size of RISs~$L$ for different number of RISs~$M$~($P_T = 30$~dBm).}
\label{Fig:L}
\end{figure}
\begin{figure}[t]
\centering
\includegraphics[width=0.65\textwidth]{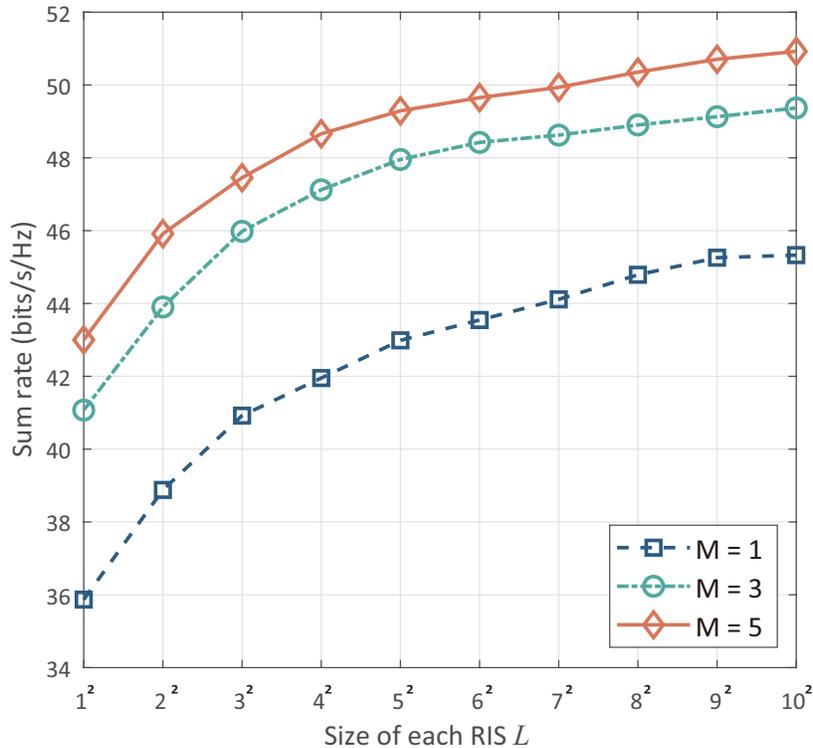}
\caption{Energy efficiency vs. size of RISs~$L$ for different number of RISs~$M$~($P_T = 30$~dBm).}
\label{Fig:Lrate}
\end{figure}

To fully explore how RISs as cost-efficient devices influence the energy efficiency of the cell-free system, we evaluate the performance with different number of RISs given a fixed total number of BSs and RISs, i.e., $N_0 = N + M$ is a constant.
Specifically, Fig. \ref{Fig:M} shows the energy efficiency vs. the number of RISs~$M$ with $N_0 = 10$, $P_T = 30$~dBm, $L = 64$, and different number of quantization bits~$b$.
As the quantization bits of RISs $b$ increases, the hardware static power consumption of each RIS increases, as given in TABLE~\ref{simulation}.

It can be seen that the energy efficiency increases with the number RISs $M$ for a small-scale deployment of RISs.
This figure also implies that the slope of energy efficiency vs. the number of deployed RISs decreases with the quantization bits of RISs $b$ since the higher resolution of RISs brings a significantly higher hardware static power consumption.
As the number of RISs $M$ continues to increase, i.e., $M\times L\rightarrow\infty$, the energy efficiency decreases as proved in Proposition~\ref{prop:M}.
The optimal number of deployed RISs $M$ shows up at a smaller value with a higher level quantization bits of RISs $b$, especially with the continuous phase shifts, due to the higher energy consumption.
\subsection{Impact of Size of each RIS}

Fig. \ref{Fig:L} shows the impact of the RIS size $L$ on the energy efficiency performance with $P_T = 30$~dBm and $b = 3$.
We can observe that as the number of elements of each RIS increases, the energy efficiency first increases and then decreases since the hardware static power consumption of each RIS increases with the number of elements in each RIS $L$, which verifies our theoretical analysis in the Proposition~\ref{prop:size}.
The optimal RIS size~$L$ shows up at a smaller value when more RISs are deployed in the cell-free system.

Fig. \ref{Fig:Lrate} depicts the sum rate versus the size of each RIS $L$ with different number of RISs in the cell-free system.
The transmit power per BS $P_T$ is set as $30$~dBm and the number of quantization bits for discrete phase shift $b$ is set as $3$.
As the number of elements of each RIS increases, the sum rate grows and gradually converges to a stable value.
Moreover, when the number of RISs grows, the gap between the curves obtained with different size of each RIS shrinks since a larger scale of RIS deployment usually provides more freedom of generating directional beams.
\begin{figure}[t]
\centering
\includegraphics[width=0.65\textwidth]{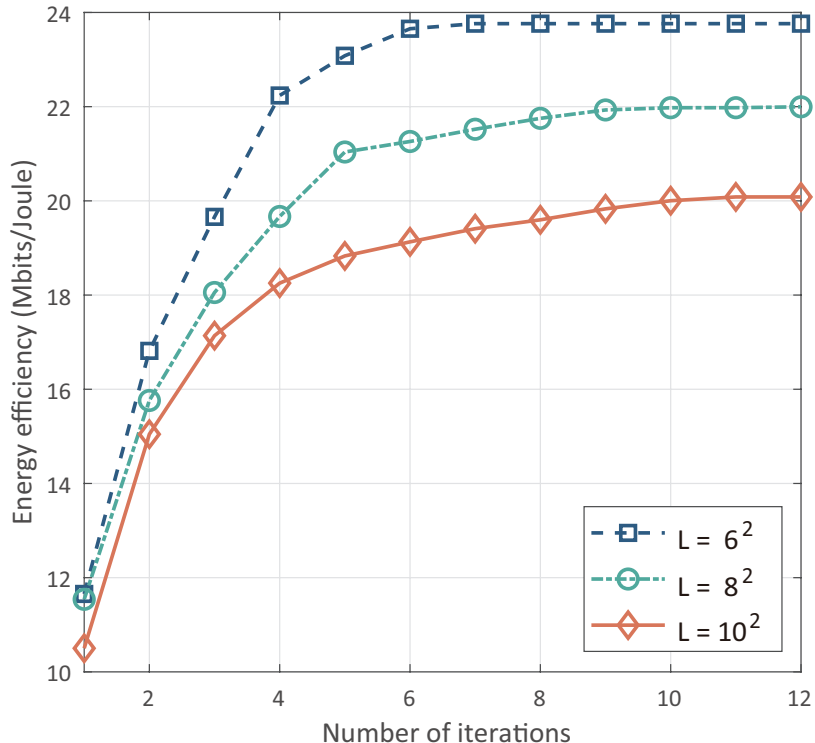}
\caption{Number of iterations to converge given different RIS sizes~($P_T = 30$~dBm).}
\label{Fig:iteration}
\end{figure}
\subsection{Convergence Speed}
Fig.~\ref{Fig:iteration} shows the energy efficiency vs. number of iterations for different sizes of each RIS.
The transmit power per BS $P_T$ is set as $30$~dBm and the number of quantization bits for discrete phase shifts~$b$ is set as $3$.
It can be observed that the convergence speed slows down when the size of each RIS $L$ grows since more variables need to be optimized.
We observe that the algorithm can converge within $12$ iterations for most cases.
Therefore, the convergence analysis provided in Section~\ref{sec:analysis}-A is verified, and the complexity is low enough to be acceptable.
\section{Conclusion \label{sec:conclusion}}%

In this paper, we considered an RIS aided cell-free MIMO system where several BSs are coordinated to serve various users with the assisting of multiple RISs.
To maximize the energy efficiency, we have proposed an HBF scheme where the digital beamforming and the RIS-based analog beamforming are performed at BSs and RISs, respectively.
The energy efficiency maximization problem has been formulated and solved by our proposed EEM algorithm in an iterative manner.
Simulation results show that, with the same transmit power in the entire system, the RIS aided cell-free system achieves better energy efficiency performance compared with the conventional ones.

Three remarks can be drawn from the theoretical analysis and numerical results, providing insights for the design of RIS aided cell-free systems.
\begin{itemize}
  \item The energy efficiency increases rapidly with the transmit power of each BS $P_T$ and gradually flattens when $P_T$ is large enough.
  \item There exists an optimal number of deployed RISs to maximize the energy efficiency.
  \item There exists a trade-off between energy efficiency and sum rate determined by the size of each RIS.
\end{itemize}

\begin{appendices}
\section{Definition of $a_1\sim a_8$ in (\ref{f_Q_de})}

The details of $a_1\sim a_8$ are defined in TABLE~\ref{tab:a}, where
$E_1 = \text{eig}(\mathbf{A}_j)$,
$E_2 = \text{Tr}(\mathbf{D}_j^{-2}\mathbf{C}_j)$,
$E_3 = \text{Tr}(\mathbf{D}_j^{-1}\mathbf{C}_j)$,
$E_4 = \text{Tr}(\mathbf{D}_j^{-2}\mathbf{C}_j^H)$,
$E_5 = \text{Tr}(\mathbf{D}_j^{-2}\mathbf{C}_j\mathbf{D}_j^{-1}\mathbf{C}_j^H+\mathbf{D}_j^{-1}\mathbf{C}_j\mathbf{D}_j^{-2}\mathbf{C}_j^H)$,
$E_6 = \text{Tr}[(\mathbf{D}_j^{-1}\mathbf{C}_j\mathbf{D}_j^{-1})^2\mathbf{C}_j^H]$,
$E_7 = \text{Tr}(\mathbf{D}_j^{-1}\mathbf{C}_j\mathbf{D}_j^{-1}\mathbf{C}_j^H)$,
and $E_8 = \text{Tr}(\mathbf{D}_j^{-1}\mathbf{C}_j^H)$.
\begin{table}[t]
\small
    \centering
    \caption{Definition of $a_1\sim a_8$}
    \begin{tabular}{c|c}
        \hline
        \hline
        \textbf{Parameter} & \textbf{Value}\\
        \hline
        \hline
        $a_1$   &   $E_1E_8-E_4$\\
        \hline
        $a_2$   &   $(2E_1E_3-E_2)E_8-E_1E_7+E_5-2E_3E_4+E_1$\\
        \hline
        $a_3$   &   $(E_1E_3^2-E_2E_3)E_8+(E_2-E_1E_3)E_7-E_6+E_3E_5-E_3^2E_4$\\
        \hline
        $a_4$   &   $E_1E_3^2-E_2E_3$\\
        \hline
        $a_5$   & $E_8$\\
        \hline
        $a_6$   &   $2E_3E_8-E_7+1$\\
        \hline
        $a_7$   &   $E_3^2E_8-E_3E_7+2E_3$\\
        \hline
        $a_8$   &   $E_3^2$\\
        \hline
        \hline
    \end{tabular}   \label{tab:a}
\end{table}
\section{Derivation of $\chi$ in (\ref{chi})}

Let $\frac{\partial f(\mathbf{Q})}{\partial \theta_{m,l}}=0$, through simplification, we have
\begin{equation}
  b_1e^{2j\theta_{m,l}}+b_2e^{j\theta_{m,l}}+b_3+b_4e^{-j\theta_{m,l}}+b_5e^{-2j\theta_{m,l}}=0,
\end{equation}
where $b_1 = j(a_1a_6-a_2a_5)$, $b_2 = 2j(a_1a_7-a_3a_5)$, $b_3 = j(3a_1a_8+a_2a_7-a_3a_6-3a_4a_5)$, $b_4 = 2j(a_2a_8-a_4a_6)$, and $b_5 = j(a_3a_8-a_4a_7)$.
According to the Euler's formula and the auxiliary Angle formula, this equation can be rewritten as
\begin{equation}\label{appB}
  \sqrt{\text{Re}\{b1+b5\}^2+\text{Im}\{b5-b1\}^2}\sin(2\theta_{m,l}+\gamma_1)+\sqrt{\text{Re}\{b2+b4\}^2+\text{Im}\{b4-b2\}^2}\sin(\theta_{m,l}+\gamma_2)=0,
\end{equation}
where $\tan\gamma_1=\frac{\text{Im}\{b5-b1\}}{\text{Re}\{b1+b5\}}$ and $\tan\gamma_2=\frac{\text{Im}\{b4-b2\}}{\text{Re}\{b2+b4\}}$.
After expanding the expression to the left of the equals by $\sin\alpha=\frac{2\tan\frac{\alpha}{2}}{1+\tan^2\alpha}$ and $\sin(\alpha+\beta)=\sin\alpha\cos\beta+\cos\alpha\sin\beta$, (\ref{appB}) can be rewritten as a quartic equation of one unknown, i.e., $\tan\frac{\theta_{m,l}}{2}$, which can be easily solved.
The solution of the equation is denoted by~$\chi$.
\section{Proof of Proposition~\ref{prop:M}} \label{appC}

Based on (\ref{powerpro}), the energy efficiency $\eta$ at \textbf{high SNR} can be rewritten as
\begin{equation}\label{anaEE}
  \eta \approx \frac{\sum_{k=1}^{K}\log_2(p_k\sigma^{-2})}{\omega\sum_{k=1}^{K}p_k + \mathcal{P}_s}
  =\frac{2BK\log_2M+B\sum_{k=1}^{K}log_2(L^2\mathbf{V}_{D,k}^H\mathbf{V}_{D,k}\sigma^{-2})}{\omega M^2\sum_{k=1}^{K} (L^2\mathbf{V}_{D,k}^H\mathbf{V}_{D,k})+M\cdot LP_R+(N_0-M)P_B+KP_U},
\end{equation}
where $N_0$ is the total number of BSs and RISs, i.e., $N_0 = N + M$.

Therefore, the derivative of energy efficiency $\eta$ with respect to the number of RISs $M$ can be given by
\begin{equation}\label{dM}
  \frac{\partial\eta}{\partial M}=\frac{c_1c_5+Mc_1c_4-M\ln Mc_1c_4-Mc_2c_4+M^2c_1c_3-2M^2\ln Mc_1c_3-2M^2c_2c_3}{Mc_5^2+(2M^2c_4+2M^3c_3)c_5+M^3c_4^2+2M^4c_3c_4+M^5c_3^2},
\end{equation}
where $c_1 = \frac{2BK}{\ln2}$, $c_2 = B\sum_{k=1}^{K}\log_2(L^2\mathbf{V}_{D,k}^H\mathbf{V}_{D,k}\sigma^{-2})$, $c_3 = \omega L^2\text{Tr}(\mathbf{V}_{D}^H\mathbf{V}_{D})$, $c_4=LP_R-P_B$, and $c_5=KP_U+N_0P_B$.
The denominator can be rewritten as $(M^{\frac{3}{2}}c_4+M^{\frac{1}{2}}c_5+M^{\frac{5}{2}}c_3)^2$ which is positive.

We now prove the molecule of (\ref{dM}) is negative.
Through simplification, the molecule can be rewritten as
\begin{equation}\label{moledM}
  \underbrace{(c_1-c_2)(Mc_1c_4+M^2c_3)}_{\textcircled{1}} + \underbrace{c_1c_5-M^2c_2c_3}_{\textcircled{2}} +\underbrace{(- M\ln M(c_1c_4+2Mc_1c_3))}_{\textcircled{3}},
\end{equation}
where the term $\textcircled{3}$ is obvious non-positive.
The valuence of term $\textcircled{1}$ and $\textcircled{2}$ are proved as follows.
\subsection{Proof of $\textcircled{1}<0$}\label{alessb}
The term $\textcircled{1}$ can be negative if $c_1<c_2$.
According to Jensen’s inequality~\cite{Jenson}, we have
\begin{equation}\label{alessbequ}
  \log_2(L^2\text{Tr}(\mathbf{V}_{D}^H\mathbf{V}_{D})\sigma^{-2})-\sum_{k=1}^{K}\log_2(L^2\mathbf{V}_{D,k}^H\mathbf{V}_{D,k}\sigma^{-2})\leq K\log_2K.
\end{equation}
Consider a high-SNR case, $K\log_2K$ is small enough to ignore compared to the terms in the left side which are equivalent to the sum rate in a cell-free system with $1$ RIS.
Hence, $c_2$ can be rewritten as
\begin{equation}
  c_2 = B\sum_{k=1}^{K}\log_2(L^2\mathbf{V}_{D,k}^H\mathbf{V}_{D,k}\sigma^{-2})
  \approx B\log_2(L^2\text{Tr}(\mathbf{V}_{D}^H\mathbf{V}_{D})\sigma^{-2}).
\end{equation}

Let $c_1<c_2$, we have
\begin{equation}
  \frac{2BK}{\ln2}<BK\log_2\left[\frac{L^2\text{Tr}(\mathbf{V}_{D}^H\mathbf{V}_{D})}{\sigma^2}\right]
\end{equation}

Through simplification, it can be given as
\begin{equation}
  \frac{L^2\text{Tr}(\mathbf{V}_{D}^H\mathbf{V}_{D})}{\sigma^2}>e^2
\end{equation}
where $e$ is Euler number, $\text{Tr}(\mathbf{V}_{D}^H\mathbf{V}_{D})\sigma^{-2}$ can be considered as a value greater than $1$ when the SNR is high.
Since each RIS relies on the combination of multiple programmable radiating elements to realize signal reflection, in general, the number of elements of each RIS $L$ is larger than $e$.
Therefore, $c_1<c_2$ holds, and thus, term $\textcircled{1}$ is negative.
\subsection{Proof of $\textcircled{2}<0$}
Based on the assumption in Proposition~\ref{prop:M}, i.e., $M\times L\rightarrow\infty$, we have
\begin{equation}
  KP_U+NP_B<\omega M^2L^2\text{Tr}(\mathbf{V}_{D}^H\mathbf{V}_{D}).
\end{equation}
Specifically, $c_5<M^2c_3$ holds.
Multiply both sides by $c_1$, we have $c_1c_5<M^2c_1c_3$.
Based on the result in Appendix~\ref{alessb}, $M^2c_1c_3<M^2c_2c_3$, and thus, $c_1c_5<M^2c_2c_3$.
Therefore, term $\textcircled{2}$ is negative.
\end{appendices}
\vspace{-1.5em}
\section{Proof of Proposition~\ref{prop:size}}\label{app:size}
When the size of each RIS tends to 1, i.e., $L\rightarrow 1$, the SNR at user $k$ can be expressed by
\begin{equation}
\begin{aligned}
  SNR_k & = \frac{|\mathbf{H}_k\mathbf{V}_{D,k}|^2}{\sum_{k=1}^{K}|\mathbf{H}_k\mathbf{V}_{D,k}|^2+\sigma^2}\\
  &\overset{(a)}=|\mathbf{H}_k\mathbf{V}_{D,k}|^2/\sigma^2\\
  &=\sum_{m=1}^{M}\sum_{l=1}^{L}\mathbf{h}_{m,l,k}\mathbf{h}_{m,l,k}^H\mathbf{V}_{D,k}^H\mathbf{V}_{D,k}/\sigma^2\\
  &=ML\mathbb{E}[\mathbf{h}_{m,l,k}\mathbf{h}_{m,l,k}^H]\mathbf{V}_{D,k}^H\mathbf{V}_{D,k}/\sigma^2\\
  &=MLa_k,
  \end{aligned}
\end{equation}
where $a_k=\mathbb{E}[\mathbf{h}_{m,l,k}\mathbf{h}_{m,l,k}^H]\mathbf{V}_{D,k}^H\mathbf{V}_{D,k}/\sigma^2$ and $\mathbf{h}_{m,l,k}$ is the channel between the BSs to user $k$ via the $l$-th element of RIS $m$.
$(a)$ is obtained by the properties of ZF beamformer as given in~(\ref{ZF-proper}).

According to the property of the logarithmic function~\cite{liangletter}, we have $\mathbb{E}[\log_2(1+SNR_k)]\approx \log_2(1+\mathbb{E}[SNR_k])$, the energy efficiency can be rewritten as
\begin{equation}
\begin{aligned}
\eta& = \frac{BR}{\mathcal{P}}\\
&=\frac{B\sum_{k=1}^{K}\log_2(1+MLa_k)}{\omega\sum_{k=1}^{K}p_k+MLP_R+kP_U+NP_B}\\
&\approx \frac{B\log_2(1+\frac{\sum_{k=1}^{K}a_k}{K})}{\omega\sum_{k=1}^{K}p_k+MLP_R+kP_U+NP_B}.
\end{aligned}
\end{equation}
Therefore, the derivative of energy efficiency $\eta$ with respect to the size of each RIS $L$ can be given by
\begin{equation}\label{dL}
\frac{\partial\eta}{\partial L}=\frac{\frac{B}{\ln2}(\omega\sum_{k=1}^{K}p_k+KP_U+NP_B)+[(L-L\ln L)\frac{B}{\ln2}-LB\log_2(\frac{M}{K}\omega\sum_{k=1}^{K}a_k)]MP_R}{L(\omega\sum_{k=1}^{K}p_k+KP_U+NP_B)^2+2L^2MP_R(\omega\sum_{k=1}^{K}p_k+KP_U+NP_B)+L^3M^2P_R^2}.
\end{equation}

Obviously, the denominator is positive.
We define the molecule as $g(L)$, let $g(L)>0$, we have
\begin{equation}
  \omega\sum_{k=1}^{K}p_k+\mathcal{P}_s > \ln\left(\frac{M}{K}\sum_{k=1}^{K}a_k\right)MP_R
\end{equation}
where the left side is not less than $\mathcal{P}_s$ and the right side is not larger than $\ln\left(\frac{NP_T}{\sigma^2}\right)\cdot MP_R$.
Therefore, when the RIS aided cell-free system satisfies that $\frac{\mathcal{P}_s}{M\cdot P_R}>\ln{\frac{NP_T}{k\sigma^2}}$, we have $g(L)>0$.

Based on Lemma~\ref{lemmaL}, the energy efficiency of RIS aided cell-free system increases when the size of each RIS $L$ is small, and then gradually drops to zero if $\frac{\mathcal{P}_s}{M\cdot P_R}>\ln{\frac{NP_T}{k\sigma^2}}$.

This completes the proof.
\vspace{1em}

\end{document}